\newcommand{\C}{\mathcal{C}}
\newcommand{\N}{\mathcal{N}}
\renewcommand{\P}{\mathcal{P}}
\renewcommand{\S}{\mathcal{S}}
\newcommand{\E}{C_N}
\newcommand{\supervec}[1]{|#1\rangle\rangle}
\newcommand{\conjsupervec}[1]{\langle\langle #1|}
\newcommand{\inner}[2]{\langle \langle #1 | #2 \rangle \rangle}
\newcommand{\tr}{\text{Tr}}
\newcommand{\state}[1]{|#1\rangle}
\newcommand{\conjstate}[1]{\langle #1|}
\newcommand{\pure}[1]{\state{#1}\conjstate{#1}}
\newcommand{\1}{\mathds{1}}
\newtheorem{proposition}{Proposition}[section]
\newcommand{\affilITP}{Institute for Theoretical Physics, ETH Z\"{u}rich, CH-8093 Z\"urich, Switzerland.}
\newcommand{\affilKON}{Department of Physics, University of Konstanz, 78464 Konstanz, Germany.}
\begin{document}

\title{Efficient separation of quantum from classical correlations for mixed states with a fixed charge}

\author{Christian Carisch}\affiliation{\affilITP}\orcid{0000-0001-7393-614X}
\author{Oded Zilberberg}\affiliation{\affilKON}\orcid{0000-0002-1759-4920}

\maketitle

\begin{abstract}
Entanglement is the key resource for quantum technologies and is at the root of exciting many-body phenomena. However, quantifying the entanglement between two parts of  a real-world quantum system is challenging when it interacts with its environment, as the latter mixes cross-boundary classical with quantum correlations. Here, we efficiently quantify quantum correlations in such realistic open systems using the operator space entanglement spectrum of a mixed state. If the system possesses a fixed charge, we show that a subset of the spectral values encode coherence between different cross-boundary charge configurations. The sum over these values, which we call ``configuration coherence'', can be used as a quantifier for cross-boundary coherence. Crucially, we prove that for purity non-increasing maps, e.g., Lindblad-type evolutions with Hermitian jump operators, the configuration coherence is an entanglement measure. Moreover, it can be efficiently computed using a tensor network representation of the state's density matrix. We showcase the configuration coherence for spinless particles moving on a chain in presence of dephasing. Our approach can quantify coherence and entanglement in a broad range of systems and motivates efficient entanglement detection.
\end{abstract}

In quantum mechanics, particles can become far more correlated than classically possible. Such correlations, dubbed entanglement, are a key resource in the present-day quantum revolution. For example, entanglement is harvested in quantum information processing devices~\cite{nielsen_chuang_2010, boixo_et_al_2018, neill_et_al_2018, arute_et_al_2019}, error-correction schemes~\cite{bennett_et_al_1996, cory_et_al_1998, schindler_et_al_2011, andersen_et_al_2020, krinner_et_al_2021}, quantum detectors that break sensitivity limits~\cite{pezze_smerzi_2009, demkowicz_et_al_2012, zhou_et_al_2018},  or secure quantum communication protocols~\cite{long_et_al_2007, hu_et_al_2016, zhang_et_al_2017}.
Entanglement can also lead to unique effects such as teleportation~\cite{bouwmeester_et_al_1997, furusawa_et_al_1998, nielsen_et_al_1998, riebe_et_al_2004}, the formation of strong correlations in many-body systems~\cite{nozieres_blandin_1980, kondo_2012, basko_et_al_2006, nandkishore_huse_2015, stormer_et_al_1999, avella_mancini_2012, bruus_flensberg_2004, carusotto_ciuti_2013, bloch_et_al_2008, campagnano_et_al_2012, shapourian_et_al_2017, wolf_et_al_2019, wolf_et_al_2021, lado_zilberberg_2019, strkalj_et_al_2021, khedri_et_al_2021, ferguson_et_al_2020, ferguson_et_al_2021, li_et_al_2018, szyniszewski_et_al_2019, liu_et_al_2022}, and the high efficiency of light-harvesting processes~\cite{sarovar_et_al_2010, caruso_et_al_2010, ishizaki_fleming_2010}.

The premise of quantum mechanics relies on having a wavefunction description for particles moving in a closed system. The wavefunction entails probability amplitudes for the state to be in different locations in the Hilbert space of the system. Commonly, the Hilbert space is very large, and entanglement has
become a modern tool for compressing the required information needed to properly describe a quantum state~\cite{nieuwenburg_zilberberg_2018, stocker_et_al_2022}. For example, in tensor network representations of quantum states, the Hilbert space is truncated such that only entangled regions are kept~\cite{perez_et_al_2007, schollwock_2005, schollwock_2011}. As such, measures for quantifying entanglement (e.g., entanglement entropy~\cite{nielsen_chuang_2010,islam_et_al_2015}) were developed to assess the potential usefulness of quantum resources, as well as to compress their representation.

In reality, however, all quantum systems are open, i.e., they are coupled to an environment and become correlated with it. The direct result of such coupling is that the state of the system can become mixed, i.e., lose its  entanglement. This is best described by considering the system's density matrix, which is kin to a covariance matrix of the state's probability amplitudes. As the density matrix describes both classical and quantum correlations of mixed states, it is notoriously difficult to distill the amount of entanglement in the system. 
In fact, it is known to be NP-hard to decide whether a mixed state is entangled or not~\cite{gurvits_2003, gharibian_2008}.
Many mixed-state entanglement measures have been developed, e.g., (R\'enyi) negativity~\cite{vidal_werner_2002, calabrese_et_al_2012, calabrese_et_al_2013, wybo_et_al_2020, sang_et_al_2021}, squashed entanglement~\cite{christandl_winter_2004}, reflected entropy~\cite{dutta_faulkner_2021}, or number entanglement~\cite{ma_et_al_2022}.
However, so far none of them can be efficiently computed for reasonably large many-body systems.

In this work, we investigate the degeneracy structure of the operator space entanglement spectrum (OSES)~\cite{zanardi_2001, prosen_pizorn_2007, pizorn_prosen_2009} and use it to define a tensor network computable measure of cross-boundary coherence and entanglement.
We start by rigorously defining the OSES as the eigenvalues of the so-called $\C$-matrix~\cite{nieuwenburg_zilberberg_2018}, and analyse it for pure and mixed states separately.
\begin{enumerate}
    \item[I.] For pure states, we find that the $\C$-matrix is diagonal with respect to the state's Schmidt basis.
We show that the sum over degenerate OSES values quantifies quantum correlations and is closely related to the negativity.
\item[II.] Moving to mixed states, we rely on a fixed charge symmetry, e.g., fixed particle number.
We show that under this added assumption,  the $\C$-matrix is block diagonal. 
We identify that certain blocks have the same eigenvalues. 
Interestingly, the resulting degenerate OSES values reflect the coherence between different local charge configurations.
Hence, we propose their sum as a convex coherence quantifier, which we call the ``configuration coherence''.
Furthermore, we prove that for purity non-increasing charge conserving maps, the configuration coherence is an entanglement measure.
\end{enumerate} 
Importantly, due to easy access to the OSES in tensor network simulations of mixed states, the configuration coherence can be efficiently calculated for many-body system sizes beyond the reach of existing entanglement measures.  
Finally, we showcase the configuration coherence by measuring the coherence and entanglement in an open quantum system during Lindblad evolution, thus motivating its broad applicability.

The \textit{entanglement spectrum} (ES) of a pure quantum state $\state{\psi}$ is defined relative to a biparition (cut) of the system into two parts $A$ and $B$ [see Fig.~\ref{fig: Systems and Failure of OSES}(a)] as the spectrum of the reduced state $\rho_A=\tr_B\left(\pure{\psi}\right)$. Concurrently, the Schmidt decomposition of the state relative to this cut is
\begin{equation}
    \label{eq: pure state Schmidt decomposition}
    \state{\psi} = \sum_{i=1}^r \sqrt{\lambda_i} \state{i, \mu_i}\,,
\end{equation}
where $r\geq 1$ is the Schmidt rank, $\sqrt{\lambda_i} \geq 0$ are real-valued Schmidt values, and $\state{i,\mu_i} = \state{i}_A \otimes \state{\mu_i}_B$ with suitable orthonormal sets of states for systems $A$ and $B$. 
The ES of a state~\eqref{eq: pure state Schmidt decomposition} is given by the squares $\lambda_i$ of its Schmidt values~\cite{li_haldane_2008}. The corresponding von Neumann entropy,  $\S_\text{vN} \equiv -\tr\left( \rho_A \log \rho_A\right)=-\sum_i \lambda_i \log \lambda_i$, serves as an entanglement measure for pure states.

Similarly to the pure case, we can define the OSES of a density matrix $\rho$ relative to a biparition of the system into two parts $A$ and $B$. The density matrix can be written relative to this biparition as
\begin{equation}
    \label{eq: density matrix}
    \rho = \sum_{\substack{i,j \in A \\ \mu, \nu \in B}} \rho_{i,\mu;j, \nu} \state{ i, \mu}\conjstate{ j, \nu},
\end{equation}
where $\state{i, \mu} = \state{i}_A \otimes \state{\mu}_B$, and $\state{i}_A$ and $\state{\mu}_B$ are basis states of the two parts of the system. The density matrix is Hermitian, and hence the prefactors in Eq.~\eqref{eq: density matrix} follow the relation $\rho_{i,\mu;j, \nu}=\rho_{j, \nu;i, \mu}^*$. 
We define the vectorized density matrix as
\begin{equation}
	\label{eq: vectorized density matrix}
	\supervec{\rho} \equiv \sum_{\substack{i,j \in A \\ \mu, \nu \in B}} \rho_{i,\mu;j, \nu} \supervec{i,\mu;j,\nu}\,,
\end{equation}
which is obtained by stacking the columns of the density matrix~\eqref{eq: density matrix} into a column vector. The inner product over such column vectors in terms of their respective density operators is defined as $\inner{\rho}{\sigma}\equiv \tr(\rho^\dagger \sigma)$.
The OSES of $\rho$ consists of the eigenvalues of the matrix~\cite{nieuwenburg_zilberberg_2018}
\begin{align}
    \label{eq: configurational matrix}
    \C &= \tr_{B}\left( \supervec{\rho}\conjsupervec{\rho}\right) \nonumber \\ &= \sum_{\substack{i,j,k,l \in A \\ \mu, \nu \in B}}^{\phantom{*}} \rho_{i, \mu ; j, \nu} \rho_{k, \mu; l, \nu}^* \supervec{i,j}\conjsupervec{k,l}\,,
\end{align}
where $\tr_{B} \mathcal{O} = \sum_{\mu, \nu \in B} \conjsupervec{\mu,\nu} \mathcal{O} \supervec{\mu,\nu}$ is the partial trace over subsystem $B$. The matrix~\eqref{eq: configurational matrix} is a positive operator that involves correlations up to fourth order in the state's probability amplitudes, and we dub it the kurtosis matrix. As we show below, the OSES contains values encoding both classical correlations as well as entanglement, raising doubts concerning its naming convention. Interestingly, the sum over both classical and quantum OSES values equals to the purity of the system, i.e., $\tr \C = \tr\left( \supervec{\rho}\conjsupervec{\rho}\right)= \tr \rho ^2 \equiv \P(\rho)$. 

\begin{figure}[t!]
    \centering
    \includegraphics[width=\columnwidth]{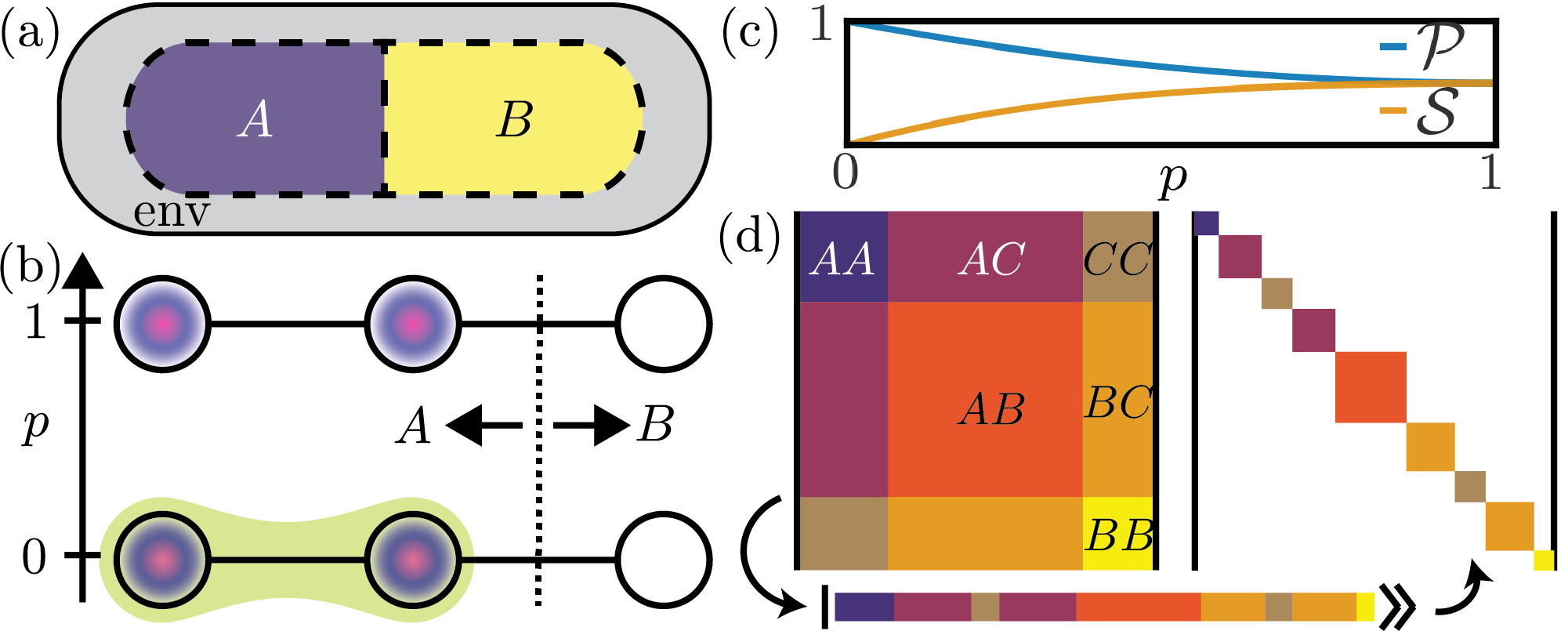}
    \caption{(a) An open bipartite system $A\cup B$ in contact with an environment. (b) A single particle on a 3-site system is distributed over sites 1 and 2 (subsystem $A$), as a function of mixing [cf.~Eq.~\eqref{eq: mix of pure and classical mixture}] from pure ($p=0$) to fully mixed ($p=1$). Yellow shading marks quantum correlations, which are lost in the fully mixed limit. (c) Purity $\P$ of the whole system and OSEE $\S$ between $A$ and $B$ in (b) as a function of mixing $p$. The OSEE increases despite the fact that the mixing does not affect cross-boundary correlations.
    (d) Pictorial derivation of the block diagonal structure of the $\C$-matrix [cf.~Eq.~\eqref{eq: configurational matrix for n particles}]. For example, the 2-particle density matrix has entries corresponding to the particles' configurations, e.g., both in subsystem $A$ ($AA$), or one in $A$ and one in $B$ ($AB$). When a particle is coherently delocalized across the bipartition, we use $C$. After vectorizing the density matrix [cf.~Eq.~\eqref{eq: vectorized density matrix}], the partial trace reveals the block diagonal structure [cf.~Eqs.~\eqref{eq: configurational matrix} and~\eqref{eq: configurational matrix for n particles}].}
    \label{fig: Systems and Failure of OSES}
\end{figure}

Ostensibly, we would like to employ the operator space entanglement entropy (OSEE), $\S\equiv~-\tr\left(\C\log\C\right)$, as an entanglement measure. Yet, it appears to be sensitive to both classical and quantum correlations~\cite{prosen_pizorn_2007, dubail_2017}. The latter is evident from our construction~\eqref{eq: configurational matrix} using a counterexample: consider a pure state with a single excitation residing solely within subsystem $A$, e.g., subsystem $A$ is composed of states $\state{1}$ and $\state{2}$, whereas subsystem $B$ of state $\state{3}$, see Fig.~\ref{fig: Systems and Failure of OSES}(b). We take the quantum state to be in an equal superposition, $\state{\psi}=(\state{1} + \state{2})/\sqrt{2}$. The corresponding OSES has a single nonvanishing value $\Lambda_A$ [cf.~Eq.~\eqref{eq: configurational matrix for n particles} and discussion below]. Hence, for our pure system $\Lambda_A=\tr(\C)=\P(\rho)=1$. Correspondingly, $\S \equiv -\Lambda_A \log \Lambda_A = 0$ as expected for a pure product state. We now locally couple subsystem $A$ to a dephasing environment, i.e., no particles leak out, but the system decoheres into a mixed state $\rho'$ after some time. As the particle remains in subsystem $A$, we still have a single eigenvalue $\Lambda_A'$ that corresponds to a reduced purity $\P(\rho')<1$ of the system. Thus, we obtain that the OSEE increases to $\S'=-\Lambda_A' \log \Lambda_A' > 0$ even though the local operation on subsystem $A$ cannot have generated entanglement between subsystems $A$ and $B$. This is a first important observation of this work.

In Fig.~\ref{fig: Systems and Failure of OSES}(c), we show the outcome of our counterexample with increasing dephasing. The latter is obtained by mixing the pure state with the classical mixture of the particle being either in state $\state{1}$ or $\state{2}$, namely by
\begin{equation}
	\label{eq: mix of pure and classical mixture}
	\rho_p = (1-p) \pure{\psi} + p\sigma\,,
\end{equation}
with $\sigma=(\pure{1} + \pure{2})/2$, see Fig.~\ref{fig: Systems and Failure of OSES}(b). The OSEE increases with increasing weight $p$ of the separable classical mixture, confirming the deficiency of the OSEE as an entanglement measure for mixed states. Crucially, we identify that part of the OSES bears no entanglement information, e.g., $\Lambda_A$ in our example. Hence, any entanglement measure should filter out such values, which may be challenging for a many-body system on a large Hilbert space. 

For pure states $\state{\psi}$, however, the filtering is relatively straightforward: we can write the  $\C$-matrix~\eqref{eq: configurational matrix} of a density matrix $\pure{\psi}$ using the state's Schmidt basis [cf.~Eq.~\eqref{eq: pure state Schmidt decomposition}] as
\begin{equation}
    \label{eq: C matrix pure}
    \C = \sum_i \lambda_i^2 \supervec{i,i}\conjsupervec{i,i} + \sum_{j\neq i} \lambda_i \lambda_j \supervec{i,j}\conjsupervec{i,j}\,.
\end{equation}
In this basis, the $\C$-matrix is diagonal and its spectrum consists of $r$ eigenvalues of type $\lambda_i^2$ and $r(r-1)/2$ two-fold degenerate eigenvalues of type $\lambda_i\lambda_j$. Thus, in this pure limit, the OSES of the density matrix is equivalent to the outer product of the ES of the state~\cite{nieuwenburg_huber_2014, nieuwenburg_zilberberg_2018}. We can also readily verify using Eq.~\eqref{eq: C matrix pure} that only a single nonvanishing $\lambda_i^2$ value appears in the pure limit of the example of Fig.~\ref{fig: Systems and Failure of OSES}(b).

Now, recall that a pure state~\eqref{eq: pure state Schmidt decomposition} is entangled if and only if its Schmidt rank is $r>1$.
As the second sum in Eq.~\eqref{eq: C matrix pure} vanishes for $r=1$ and is finite and positive for $r>1$, we propose the sum over these eigenvalues as a quantifier of quantum correlations in pure states,
\begin{equation}
    \label{eq: pure state entanglement measure}
    \E (\state{\psi}) \coloneqq \sum_{j\neq i} \lambda_i \lambda_j\,.
\end{equation}
In other words, we obtain the quantum correlations of a pure state as the sum over inherently degenerate eigenvalues of the matrix $\C$ and filter out the $\lambda_i^2$ values.

Interestingly, the quantity~\eqref{eq: pure state entanglement measure} is closely related to the negativity of the state, which is defined as the absolute value of the sum over all negative eigenvalues of the partial transpose $\rho^{T_B}$ of the density matrix~\cite{vidal_werner_2002}. Indeed, using the Schmidt decomposition~\eqref{eq: pure state Schmidt decomposition}, the negativity reads 
\begin{equation}
\N~=~\frac{1}{2}\sum_{j\neq i}\sqrt{\lambda_i}\sqrt{\lambda_j}\,.
\end{equation}
Thus, our Eq.~\eqref{eq: pure state entanglement measure} defines a new way to calculate the negativity for pure states: it is obtained as the sum over the square roots of the degenerate eigenvalues of the $\C$-matrix. This is a second important observation of this work. Furthermore, the definition of the quantity~\eqref{eq: pure state entanglement measure} via the matrix $\C$ lends a natural extension to open systems. The remaining challenge involves the identification of $\C$-matrix eigenvalues that encode cross-boundary quantum correlations for mixed states.

\begin{figure}[t!]
    \centering
    \includegraphics[width=\columnwidth]{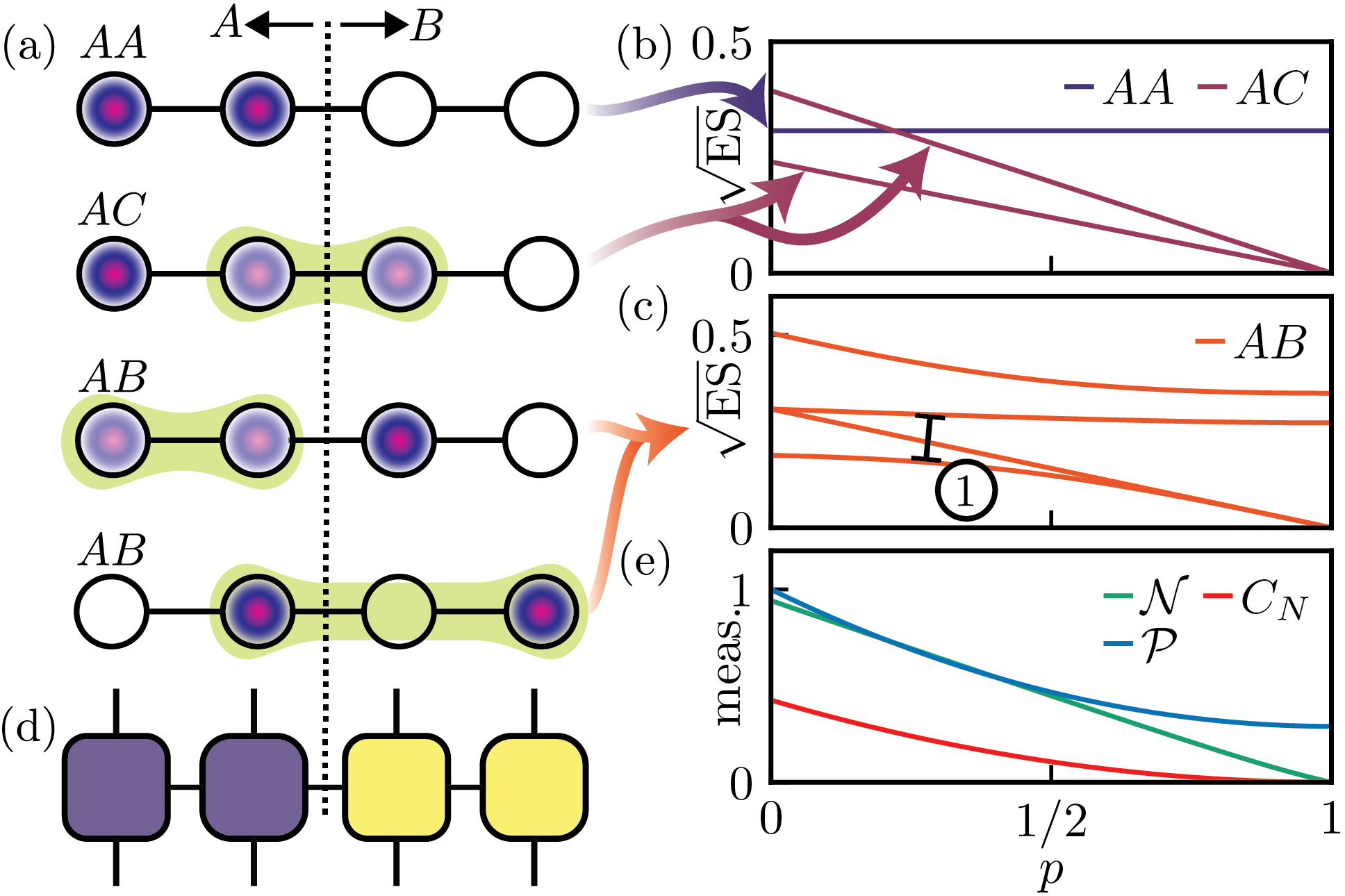}
    \caption{(a) Examples of configurations of two particles on four sites with respect to a bipartition in the middle. 
    Yellow shadings mark quantum correlations. (b) and (c) OSES corresponding to the configurations in (a) along a mixing interpolation [cf.~Eq.~\eqref{eq: mix of pure and classical mixture}]. (1) marks an avoided crossing between spectral values of a classically and a quantum correlated $AB$ configuration.
    (d) MPDO representation of a density matrix on 4 sites.
    (e) Negativity $\N$, purity $\P$, and configuration coherence $\E$ for the spectrum in (b) and (c).}
    \label{fig: 2 particle example}
\end{figure}

Before turning to discuss the OSES of mixed states in more detail, we present some general comments about mixed state entanglement.
As mentioned above, the mixed state separability problem is NP-hard~\cite{gurvits_2003, gharibian_2008}.
The complexity of the problem should therefore be reduced in order to develop a computable mixed state entanglement measure.
Here, we impose a symmetry to restrict the relevant Hilbert space dimension.
Within this space, we consider states which have a fixed value of the symmetry's conserved charge.
Without loss of generality, we study systems with a fixed number of $N$ particles.
This restriction leads to degeneracies in the OSES, which allow us to define the configuration coherence as a measure of quantum coherence.
We prove that the configuration coherence is an entanglement measure under purity non-increasing charge conserving maps.
Our approach is related to recent studies of symmetry-resolved coherence and entanglement measures~\cite{cornfeld_et_al_2018, macieszczak_et_al_2019, ma_et_al_2022}.

We write the density matrix~\eqref{eq: density matrix} using the basis states $\state{i_{n},\mu_{n}}$ with $0 \leq n \leq N$ particles in subsystem $B$ and $N-n$ particles in subsystem $A$. The $\C$-matrix in this basis is block diagonal,
\begin{align}
    \label{eq: configurational matrix for n particles}
    \C &=  \bigoplus_{n, n'=0}^N \C_{n,n'}\,,
\end{align}
with blocks 
\begin{equation}
\C_{n,n'}=\sum_{\substack{i_{n},j_{n'}  k_n,l_{n'}}} c^{\phantom{\dagger}}_{i_n,j_{n'};k_n,l_{n'}} \supervec{i_n,j_{n'}}\conjsupervec{k_n,l_{n'}}
\end{equation}
and coefficients
\begin{equation}
c^{\phantom{\dagger}}_{i_n,j_{n'};k_n,l_{n'}} = \sum_{\mu_n,\nu_{n'}}\rho_{i_n,\mu_{n};j_{n'},\nu_{n'}} \rho_{k_n,\mu_n;l_{n'},\nu_{n'}}^*\,.
\end{equation}
A graphical derivation of this block diagonal form is shown in Fig.~\ref{fig: Systems and Failure of OSES}(d). Importantly, we can interpret the blocks in terms of configurations of the $N$ particles with respect to the bipartition: the block $\C_{n,n'}$ contains all information on $\text{min}(n,n')$ particles that are fully in subsystem $B$, $N-\text{max}(n,n')$ particles that are fully in subsystem $A$, and $\text{max}(n,n')-\text{min}(n,n')$ particles that are coherently distributed across the cut. 

In the following, we compare the eigenvalues of the blocks $\C_{n, n'}$ with the pure case limit~\eqref{eq: C matrix pure} to distinguish classical from quantum correlations for mixed states. We accompany our discussion with an example of a chain with $N=2$ spinless particles residing on 4 sites, see Fig.~\ref{fig: 2 particle example}.
We begin with the blocks $\C_{0, 0}$ and $\C_{N,N}$, which are rank 1 and have eigenvalues $\Lambda_{0,0}~=~~\sum_{i,j} |\rho_{i,0;j,0}|^2$ and $\Lambda_{N,N}~=\sum_{\mu,\nu} |\rho_{0,\mu;0,\nu}|^2$, corresponding respectively to the scenario where all $N$ particles reside solely in subsystem $A$ or $B$, see Fig.~\ref{fig: 2 particle example}(a). These eigenvalues do not contain any information about cross-boundary coherence nor contribute to quantum correlations between subsystems $A$ and $B$. Indeed, these values are generally non-degenerate, and we identify that they reduce to eigenvalues of type $\lambda_i^2$ in the pure case limit, cf.~Eq.~\eqref{eq: C matrix pure}. Furthermore, such values do not vanish for a fully mixed state [see Fig.~\ref{fig: 2 particle example}(b)], justifying our choice to not include them in our correlation measure~\eqref{eq: pure state entanglement measure}.

The blocks $\C_{n,n'}$ for $n \neq n'$ describe a scenario where at least one of the particles is in a coherent cross-boundary state and clearly encode cross-boundary coherence. The blocks $\C_{n,n'}$ and $\C_{n',n}$ generate the same eigenvalues as they are related via a unitary transformation, $\C_{n,n'}=U\C_{n',n}U^{-1}$, with the permutation $U\supervec{i,j'} = \supervec{j',i}$. Hence, such coherent eigenvalues of $\C$ are inherently degenerate, and must map to the eigenvalues of type $\lambda_i \lambda_j$ in the pure case limit~\eqref{eq: C matrix pure}. Therefore, we employ these eigenvalues to extend the pure state correlation measure~\eqref{eq: pure state entanglement measure} to the mixed case. 
Specifically, we define the \emph{configuration coherence} as their sum,
\begin{align}
    \label{eq: configuration entanglement}
    \E (\rho) &\coloneqq \sum_{n\neq n'}\tr\left(\C_{n, n'}\right) \nonumber \\  &=
    \sum_{n\neq n'}\sum_{\substack{i_n,j_{n'} \\ \mu_n,\nu_{n'}}}|\rho_{i_n,\mu_n;j_{n'},\nu_{n'}}|^2\,.
\end{align}
%where we used the definition of the $\C$-matrix values in terms of the original density matrix entries. 
Crucially, the configuration coherence contains only off-diagonal elements of the density matrix, which vanish in the fully decohered case, in conjunction with the fact that the fully decohered state contains no quantum correlations, see Figs.~\ref{fig: 2 particle example}(a), (b), and (e).
That the configuration coherence can be written in terms of off-diagonal density matrix entries explains our naming choice: quantum coherence is identified with the off-diagonal elements of the density matrix.

Before we turn to discuss the properties of the configuration coherence, let us highlight the key aspects of our derivation of Eq.~\eqref{eq: configuration entanglement}.
We started with the $\C$-matrix~\eqref{eq: configurational matrix for n particles} of a mixed state with a fixed particle number.
In any bipartite basis $\supervec{i_n, \mu_n}$, the $\C$-matrix is block-diagonal and the blocks $\C_{n, n'}$ reflect different local charge configurations $(n,\ n')$.
Therefore, we can identify their basis-independent eigenvalues (their OSES values) with the charge configurations.
We have shown that the blocks $\C_{n,n'}$ and $\C_{n',n}$ for $n\neq n'$ lead to the same eigenvalues and that therefore the OSES is degenerate.
Finally, we sum up the degenerate OSES values by summing the traces of the degenerate blocks $\C_{n\neq n'}$ and arrive at the configuration coherence~\eqref{eq: configuration entanglement}.
Importantly, we did not impose any further restriction onto the mixed state except for the presence of a fixed charge.
Moreover, the configuration coherence~\eqref{eq: configuration entanglement} is basis independent, which is more apparent in our alternative definitions~\eqref{eq: alternative version of configuration entanglement} and \eqref{eq: configuration entanglement as relative purity} in appendix~\ref{app: configuration entanglement}.

The measure $\E$ encodes the coherence between sectors of different local particle numbers $(n, \, n')$ and has the following properties:
\begin{enumerate}
    \item\label{property: vanishes for separable} It vanishes for separable states with fixed particle number.
    \item\label{property: convex} It is convex, i.e., $\E(\sum_i p_i \rho_i) \leq \sum_i p_i \E (\rho_i)$ for density matrices $\rho_i$ and $\sum_i p_i=1$.
    \item\label{property: unitary invariance} It is invariant under local particle number conserving unitary operations.
    \item\label{property: montonous under unital} It is monotonously decreasing under local purity non-increasing particle number conserving operations.
\end{enumerate}
Proofs for these statements can be found in appendix~\ref{app: configuration entanglement}.
Let us briefly discuss the implications of these properties.
Property~\ref{property: vanishes for separable} means that the configuration coherence is an entanglement witness for systems with a fixed particle number:
if the state of such a system has non-zero configuration coherence, it cannot be separable, i.e., it is entangled.
Due to property~\ref{property: unitary invariance}, the configuration coherence is independent of the local basis choice and cannot be changed by isolated subsystem evolution.
Property~\ref{property: montonous under unital}, in combination with the ability to witness entanglement, makes the configuration coherence an entanglement measure under purity non-increasing particle number conserving maps~\cite{horodecki_et_al_2009}.
The most general purity non-increasing maps are unital maps, i.e., those that preserve the completely mixed state~\cite{gour_et_al_2015, streltsov_et_al_2018}.
Importantly, these include evolutions governed by a Lindblad master equation with Hermitian jump operators~\cite{manzano_2020}, making the configuration coherence a powerful and versatile quantifier for coherence and entanglement.
The definition of the configuration coherence~\eqref{eq: configuration entanglement} together with its properties is the main result of this work.

The remaining blocks of the $\C$-matrix with $n=n'\notin[0,N]$ describe both classical and quantum correlations. This is evident from their pure limit, where they exhibit both non-degenerate and degenerate eigenvalues, of which only the latter contribute to the measure~\eqref{eq: pure state entanglement measure}, see Fig.~\ref{fig: 2 particle example}(c). Note that when the state is mixed, the degeneracy may be lifted via avoided crossings between classical- and quantum-correlation values (see marker (1) in Fig.~\ref{fig: 2 particle example}(c)).
In this situation, it is in general not possible to assign the OSES values of such blocks to classical or quantum correlations.
We, therefore, do not include these values in the configuration coherence~\eqref{eq: configuration entanglement}.
The OSES values of the $n=n'\notin [0, N]$ mix information about classical and quantum correlations within the $n=n'$ charge sector, and the classical correlations would tamper with the desirable properties of the configuration coherence.
Specifically, the configuration coherence would neither be an entanglement witness (classically correlated states would have non-zero configuration coherence) nor an entanglement monotone (by increasing classical correlations we could increase the configuration coherence).
Distilling the amount of quantum correlations contained in such OSES values will be the topic of future work.

%The reason for this is that there exist both classically and quantum correlated configurations of $n$ particles in subsystem $B$ (see Fig.~\ref{fig: 2 particle example}(a)). 

\begin{figure}[t!]
    \centering
    \includegraphics[width=\columnwidth]{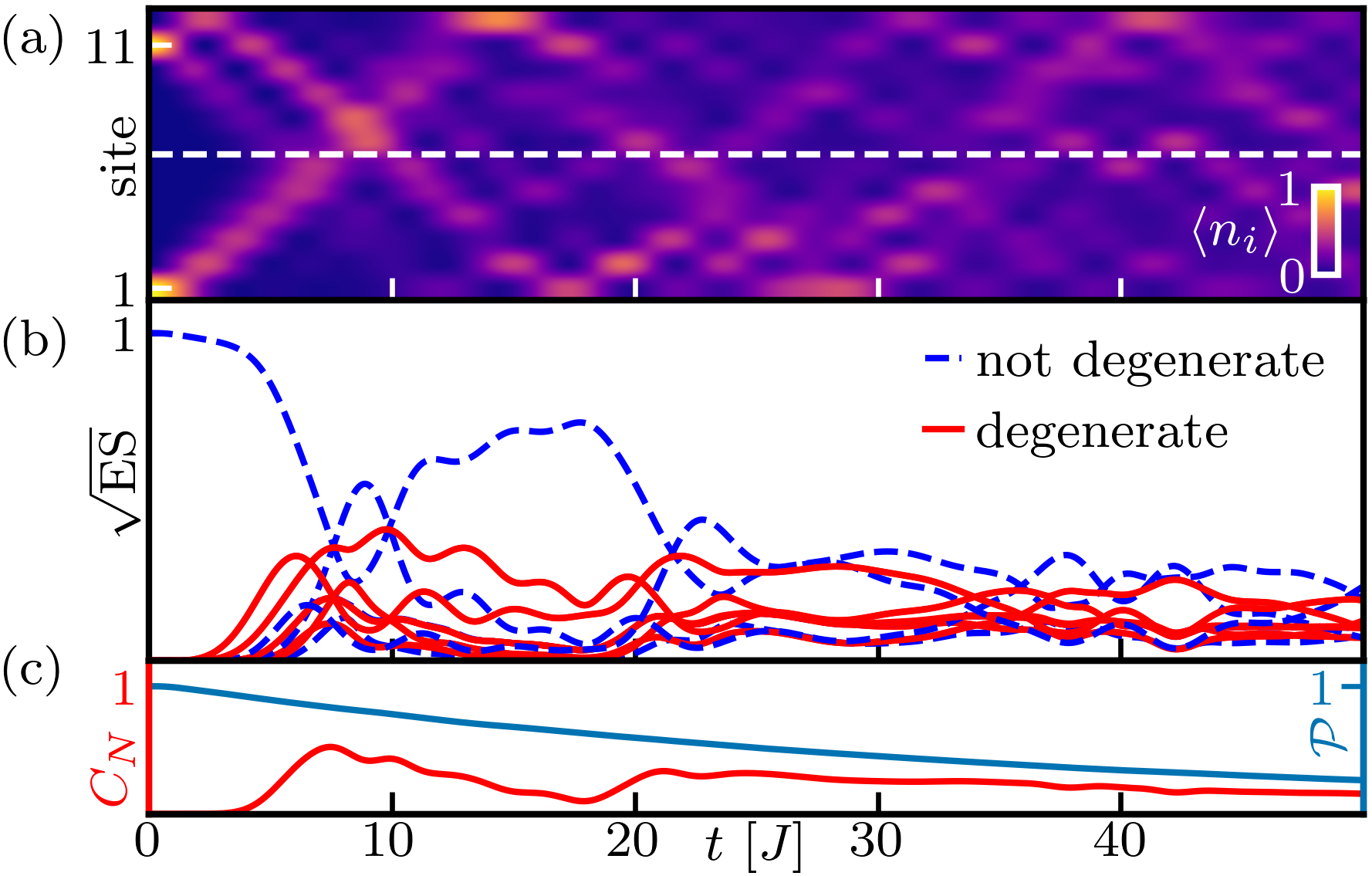}
    \caption{ Two particles hopping on a 12-site chain while subject to dephasing [cf.~Eq.~\eqref{eq: Lindblad}], implemented in MPDO-TEBD with timestep $dt=0.05J$, dephasing strength $\gamma=0.005J$, and maximal bond dimension $\chi=200$.
    (a) Local particle densities $\langle n_i \rangle$ as a function of space along time. Dashed white line denotes bipartition at the middle bond.
    (b) Evolution of the OSES at the bipartition. Blue-dashed lines: non-degenerate OSES values. Red lines: degenerate OSES values encoding the cross-boundary coherence.
    (c) Configuration coherence $\E$ and purity $\P$ calculated as the sum over the degenerate and all OSES values, respectively.}
    \label{fig: algorithm}
\end{figure}

The remaining challenge involves the efficient spectral filtering of a density matrix $\rho$ describing a mixed state of a realistic open quantum system with fixed particle number. For one dimensional systems, we propose to compute the configuration coherence using a matrix product density operator (MPDO) decomposition of $\rho$~\cite{verstraete_et_al_2004} [cf.~Fig.~\ref{fig: 2 particle example}(d)].
The MPDO description gives direct and efficient access to the OSES~\cite{schollwock_2011}. 
In principle, the configuration coherence~\eqref{eq: configuration entanglement} can then be obtained as the sum over the degenerate values. 
However, there might be accidental degeneracies due to level crossings in the spectrum, see~Fig.~\ref{fig: 2 particle example}(b). 
Such degeneracies can be revealed by a small continuous deformation of the MPDO, e.g., using an interpolation of the form
\begin{equation}
\label{eq: mixed state interpolation}
    \rho_p = (1-p)\rho + p {\rm diag}(\rho)\,.
\end{equation}
This interpolation can be applied independent of the representation of the density matrix, and it is straightforwardly implemented within the MPDO formalism:  
The diagonal state ${\rm diag}(\rho)$ is obtained by setting the off-diagonal elements of the local MPDO-matrices to zero and the addition is an efficient operation for MPDOs (the bond dimension of a sum of MPDOs is bounded by the sum of the individual bond dimensions).
It is important to note that the computability of the configuration coherence from an MPDO-representation of a mixed state at no additional computational cost is a major advantage of our approach. 
In comparison, the calculation of the negativity requires additional diagonalization of an exponentially large operator.

We turn now to showcase the configuration coherence in a realistic open system scenario. We consider a system of $N=2$ spinless particles moving on a 1D chain with 12 sites in the presence of dephasing. For this system, we can readily obtain an exact MPDO representation because of the small rank of the $\C$-matrix, see appendix~\ref{app: rank of C-matrix}. For larger systems requiring a truncated MPDO description, our algorithm will yield a tight lower bound on the configuration coherence. The time evolution follows a Lindblad master equation
\begin{equation}
	\label{eq: Lindblad}
	\partial_t \rho =  - i [H, \rho] + \gamma \sum_i  \left(2 n_i\rho n_i -  \left\{n_i, \rho\right\}\right)\,,
\end{equation}
with a hopping Hamiltonian 
\begin{equation}
H = J\sum_i c_i^\dagger c_{i+1} + h.c.\,, 
\end{equation}
and local density operators $n_i = c_i^\dagger c_i$. The parameters $J$ and $\gamma$ are the hopping amplitude and the dephasing coupling rate to local baths, respectively.
As discussed above, by property~\ref{property: montonous under unital} and the unitality of the evolution~\eqref{eq: Lindblad}, it follows that the configuration coherence is a faithful entanglement measure for this example.

We initialise the system in a product state of one particle on site 1 and the other on site 11, and evolve Eq.~\eqref{eq: Lindblad} using time evolving block decimation (TEBD) with the \textsc{Julia} ITensors package~\cite{fishman_et_al_2021}. In
Fig.~\ref{fig: algorithm}(a), we present the resulting density of the two particles.
The OSES associated with a half-chain bipartition is directly obtained from the MPDO representation throughout the time evolution, see Fig.~\ref{fig: algorithm}(b).
%Note that the MPDO description relies on a singular value decomposition with singular values that are square-roots of the OSES values.
As expected for a product state, the OSES at $t~=~0$ consists of a single value only, describing one particle in each of the subsystems to the left and right of the cut. Along the time evolution, the particles delocalize across the cut, leading to cross-boundary coherence and entanglement, evident by the emergence of degenerate OSES values.
We extract the configuration coherence~\eqref{eq: configuration entanglement} as the sum over the degenerate OSES values and the purity as the sum over all OSES values, see  Fig.~\ref{fig: algorithm}(c). 
The dissipative Lindblad terms continuously decrease the purity.
%Moreover, we also obtain the classical correlations in the system by subtracting our measure from the purity. Note that by repeated entanglement analysis during the time evolution, we avoid large discrepancies between $\mathcal{M}_1$ and $\mathcal{M}_2$ [see minor jumps in the entanglement measure and the classical correlations in Fig.~\ref{fig: algorithm}]. % and for simplicity do not include the weighted sum with cross-gap overlap tracking. 

In conclusion, we have analyzed the OSES for general pure states as well as mixed states with a fixed particle number.
For pure states, the sum over degenerate OSES values lends a quantifier of quantum correlations that is closely related to the negativity.
For mixed states with fixed particle number, we defined the sum over degenerate OSES values as the configuration coherence.
The configuration coherence is a basis-independent witness of entanglement between sectors with different local particle numbers.
Moreover, for purity non-increasing particle number conserving maps such as Lindblad-type evolutions with Hermitian jump operators, the configuration coherence is an entanglement measure.
As an example, we have measured the entanglement of spinless particles using a MPDO-TEBD algorithm. 
Thus, we have shown that the configuration coherence can be efficiently computed for 1D systems with a suitable low-rank MPDO representation.
Such systems include infinite size dissipative quantum chains~\cite{gangat_et_al_2017}, open many-body localized systems~\cite{fischer_et_al_2016, van_nieuwenburg_et_al_2017, lenarcic_et_al_2020}, strongly thermalizing systems~\cite{white_et_al_2018}, exciton dynamics~\cite{jaschke_et_al_2018}, the quantum Heisenberg magnet~\cite{dupont_et_al_2021}, and temporal entanglement in many-body Floquet dynamics~\cite{lerose_et_al_2021, sonner_et_al_2021}.
%Beyond MPDO representations, we expect that the configuration coherence can also be efficiently computed for systems described by the multiscale entanglement renormalization ansatz (MERA)~\cite{evenbly_vidal_2009} and projected entangled pair states (PEPS)~\cite{verstraete_et_al_2006}.
Experimentally, the configuration coherence can be obtained by estimating the purity of the mixed state~\cite{lukin_et_al_2019, brydges_et_al_2019} and subtracting the values encoding classical correlations; the latter are constructed out of local density measurements.
Our results facilitate the study of entanglement and coherence in contemporary noisy intermediate-scale quantum era systems~\cite{preskill_2018, brooks_2019} and motivate further OSES-based measures and complexity estimates.
A natural extension of our work will involve the potential of discarding the fixed charge assumption for the mixed state.
In a first step, one could calculate perturbative corrections to the degenerate OSES values if a second charge sector weakly contributes to the mixed state.
We expect the perturbation to lift the degeneracies.
This will also happen when there is no symmetry in the system.
In this case, the degeneracy structure of the OSES might contain different information.

\emph{Acknowledgments.} We thank E.~van  Nieuwenburg, M.~S.~Ferguson, A.~\v{S}trkalj, M.~H.~Fischer, L.~Stocker, A.~Romito, J.~del~Pino, D.~Sutter, J.~Renes, M.~Michalek, and S.~Ryu for fruitful discussions. We thank Z.~Ma, C.~Han, and E.~Sela for pointing out a wrong statement in an earlier version of this manuscript and for drawing our attention to their related work on number entanglement~\cite{ma_et_al_2022}. We acknowledge financial support by ETH Research Grant ETH-51 20-1 and the Deutsche Forschungsgemeinschaft (DFG) - project number 449653034.

\bibliographystyle{unsrtnat}
\bibliography{entanglement}

\onecolumn\newpage
\appendix

\section{Properties of the configuration coherence}
\label{app: configuration entanglement}
In the main text, we introduce the configuration coherence in terms of the density matrix~\eqref{eq: density matrix}  and the $\C$-matrix~\eqref{eq: configurational matrix} as
\begin{align}
    \label{eq: configuration entanglement (app)}
    \E (\rho) &\coloneqq \sum_{n\neq n'}\tr\left(\C_{n, n'}\right) =
    \sum_{n\neq n'}\sum_{\substack{i_n,j_{n'} \\ \mu_n,\nu_{n'}}}|\rho_{i_n,\mu_n;j_{n'},\nu_{n'}}|^2\,.
\end{align}
Here, we prove the properties~\ref{property: vanishes for separable},\ref{property: convex},\ref{property: unitary invariance}, and~\ref{property: montonous under unital} of the configuration coherence introduced in the main text.
To this end, we provide two alternative definitions of the configuration coherence in terms of local particle number projectors.
As in the main text, we will assume that the system has a fixed particle number $N$, but the discussion is valid for any fixed charge. 
%We use similar concepts as the authors of Ref.~\cite{ma_et_al_2022} for their discussion of number entanglement.

We define the local projectors $\Pi_n^B$ that measure the particle number in subsystem $B$.
The completeness relation of the projectors is $\sum_n \Pi_n^B=\1^B$ and they are orthogonal, $\Pi_n^B \Pi_m^B= \delta_{nm}\Pi_n^B$.
%Then, a local particle number measurement of the density matrix $\rho$ gives
%\begin{equation}
%    \tilde{\rho} \coloneqq \sum_n (\1^A \otimes \Pi_n^B) \rho (\1^A \otimes \Pi_n^B).
%\end{equation}
These projectors allow for an alternative formulation of the configuration coherence:
\begin{proposition}
The configuration coherence is given as 
\begin{align}
\label{eq: alternative version of configuration entanglement}
\E (\rho) = \sum_{n\neq n'} ||(\1^A \otimes \Pi_n^B) \rho (\1^A \otimes \Pi_{n'}^B)||^2, 
\end{align}
where $||A||=\sqrt{\tr (A^\dag A)}$ is the Frobenius norm.
\end{proposition}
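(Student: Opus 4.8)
The plan is to evaluate the right-hand side directly by expanding the projectors in the bipartite basis and reducing the Frobenius norm to a sum of squared matrix elements, thereby recovering the defining expression~\eqref{eq: configuration entanglement (app)}. First I would observe that, because $\Pi_n^B$ measures the $B$-particle number, the operator $\1^A \otimes \Pi_n^B$ acts as the identity on any basis ket $\state{i,\mu}$ whose $B$-part $\state{\mu}_B$ carries exactly $n$ particles and annihilates it otherwise. Crucially, since $\rho$ is supported entirely on the fixed total-charge sector $N$, projecting the $B$-part onto $n$ particles automatically slaves the $A$-part to $N-n$ particles; this is what lets me identify the surviving kets with the sector-labelled states $\state{i_n,\mu_n}$ used in the main text.

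Next I would compute
\begin{equation}
(\1^A\otimes\Pi_n^B)\,\rho\,(\1^A\otimes\Pi_{n'}^B) = \sum_{\substack{i_n,\mu_n\\ j_{n'},\nu_{n'}}} \rho_{i_n,\mu_n;j_{n'},\nu_{n'}}\,\state{i_n,\mu_n}\conjstate{j_{n'},\nu_{n'}}\,,
\end{equation}
where the left projector fixes the ket $B$-charge to $n$ (hence $A$-charge to $N-n$) and the right projector fixes the bra $B$-charge to $n'$ (hence $A$-charge to $N-n'$). The step that carries the argument is then the orthonormality of the elementary operators under the Hilbert--Schmidt inner product, $\tr\!\big((\state{a}\conjstate{b})^\dagger \state{c}\conjstate{d}\big) = \delta_{ac}\delta_{bd}$, which I would use to evaluate the squared Frobenius norm as
\begin{equation}
||(\1^A\otimes\Pi_n^B)\,\rho\,(\1^A\otimes\Pi_{n'}^B)||^2 = \sum_{\substack{i_n,\mu_n\\ j_{n'},\nu_{n'}}} |\rho_{i_n,\mu_n;j_{n'},\nu_{n'}}|^2\,.
\end{equation}
Summing over $n\neq n'$ reproduces exactly~\eqref{eq: configuration entanglement (app)}.

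The bookkeeping of charge sectors is the only delicate point: I must verify that the two projectors genuinely restrict all four indices consistently with the labels $i_n,\mu_n,j_{n'},\nu_{n'}$, and that no cross terms between distinct sectors survive. The latter follows because the elementary operators belonging to different $(n,n')$ blocks are Hilbert--Schmidt orthogonal, so the norm decomposes sector by sector with no interference. As a consistency check I would note that including the diagonal $n=n'$ terms would give $\sum_{n,n'}||(\1^A\otimes\Pi_n^B)\rho(\1^A\otimes\Pi_{n'}^B)||^2 = \tr(\rho^2)=\P(\rho)$, confirming that the omitted $n=n'$ contributions are precisely the purity carried by the diagonal charge blocks, in agreement with $\tr\C=\P(\rho)$ established in the main text.
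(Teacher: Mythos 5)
Your proof is correct and follows essentially the same route as the paper: you expand $\Pi_n^B$ in the particle-number basis, show that $(\1^A\otimes\Pi_n^B)\rho(\1^A\otimes\Pi_{n'}^B)$ retains exactly the coefficients $\rho_{i_n,\mu_n;j_{n'},\nu_{n'}}$, and evaluate the Frobenius norm via the trace to recover $\tr(\C_{n,n'})$ and hence Eq.~\eqref{eq: configuration entanglement (app)}. Your explicit remark that the fixed total charge $N$ slaves the $A$-sector to $N-n$ once the $B$-projection is applied, and your purity consistency check on the $n=n'$ terms, are welcome clarifications of points the paper leaves implicit, but they do not change the argument.
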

\begin{proof}
%We know that $\tr (\rho^2) = \tr \C$ (this follows from the general fact that the sum of singular values squared equals the sum of eigenvalues squared). Inserting two identities $\1 = \sum_n \1^A \otimes \Pi^B_n$, we find
%\begin{align}
%\tr \rho ^2 = \tr \left( \sum_{n,n'} (\1^A \otimes \Pi^B_n) \rho (\1^A \otimes \Pi^B_{n'}) \right)^2 = \sum_{n,n'}\tr\left((\1^A \otimes \Pi^B_n) \rho (\1^A \otimes \Pi^B_{n'}) \right)^2,
%\end{align}
%where the second equality follows from the linearity of the trace and the orthogonality of the projectors.

Using the particle number basis $\state{i_n,\mu_n}=\state{i_n}_A\otimes\state{\mu_n}_B$, we can write $\Pi_n^B=\sum_{\mu_n}\state{\mu_n}\conjstate{\mu_n}$.
Expressing the density matrix in the same basis, we find
\begin{align}
    \rho_{n,n'} &\coloneqq (\1^A \otimes \Pi_n^B) \rho (\1^A \otimes \Pi_{n'}^B) \nonumber \\
    &= (\1^A \otimes \sum_{\mu_n}\state{\mu_n}\conjstate{\mu_n}) \left[\sum_{m,m'}\sum_{\substack{i_m,\mu_m\\ j_{m'},\nu_{m'}}} \rho_{i_m,\mu_m;j_{m'},\nu_{m'}} \state{i_m,\mu_m}\conjstate{j_{m'},\nu_{m'}} \right](\1^A \otimes \sum_{\mu_{n'}}\state{\mu_{n'}}\conjstate{\mu_{n'}}) \\
    &= \sum_{\substack{i_n,\mu_n\\ j_{n'},\nu_{n'}}} \rho_{i_n,\mu_n;j_{n'},\nu_{n'}} \state{i_n,\mu_n}\conjstate{j_{n'},\nu_{n'}} \nonumber\,.
\end{align}
Next, we calculate the Frobenius norm squared of this matrix as
\begin{align}
    ||\rho_{n,n'}||^2 = \tr(\rho_{n,n'}^\dag \rho_{n,n'})=\sum_{\substack{i_n,\mu_n\\ j_{n'},\nu_{n'}}} |\rho_{i_n,\mu_n;j_{n'},\nu_{n'}}|^2 = \tr(\C_{n,n'}).
\end{align}
Finally, we find
\begin{align}
    \E(\rho) = \sum_{n\neq n'} \tr(\C_{n,n'})=\sum_{n\neq n'}||\rho_{n,n'}||^2=\sum_{n\neq n'}||(\1^A \otimes \Pi_n^B) \rho (\1^A \otimes \Pi_{n'}^B)||^2\,.
\end{align}
%We can write the $\C$-matrix blocks in terms of the projectors as $\C_{n,n'}=\tr_B\left(\supervec{\rho_{n,n'}} \conjsupervec{\rho_{n,n'}}\right)$, with $\rho_{n,n'} = (\1^A \otimes \Pi^B_n) \rho (\1^A \otimes \Pi^B_{n'})$. Equivalently, $\C_{n,n'}=\tilde{\rho}_{n,n'} \tilde{\rho}_{n,n'}^\dag $, with the realigned matrix $\conjstate{i,j}\tilde{\rho}_{n,n'}\state{\mu, \nu} \coloneqq \conjstate{i,\mu}\rho_{n,n'}\state{j, \nu}$~\cite{chen_wu_2002, lupo_2008}. Thus,
%\begin{align}
%    \tr \C_{n,n'} &= \sum_{i,j} \conjstate{i,j} \tilde{\rho}_{n,n'} \tilde{\rho}_{n,n'}^\dag \state{i,j} = \sum_{i,j} \sum_{\mu, \nu} \conjstate{i,j}\tilde{\rho}_{n,n'}\state{\mu, \nu} \conjstate{i, j}\tilde{\rho}_{n,n'}\state{\mu, \nu}^* \nonumber \\ 
%    &= \sum_{i,j} \sum_{\mu, \nu} \conjstate{i,\mu}\rho_{n,n'}\state{j, \nu} \conjstate{i,\mu}\rho_{n,n'}\state{j, \nu}^* = \tr \rho_{n,n'}^\dag \rho_{n,n'}= ||\rho_{n,n'}||^2\,.
%\end{align}
\end{proof}

\begin{proposition}
\label{prop: second alternative form of configuration coherence}
The configuration coherence is given as
\begin{equation}
\label{eq: configuration entanglement as relative purity}
    \E(\rho) = \tr\left((\rho - \rho_\Pi)^2\right)\,,
\end{equation}
with the locally measured density matrix 
\begin{equation}
\label{app: locally measured density matrix}
   \rho_\Pi \coloneqq \sum_n (\1^A \otimes \Pi_n^B) \rho (\1^A \otimes \Pi_n^B)\,.
\end{equation}
\end{proposition}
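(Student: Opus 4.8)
The plan is to build directly on the block decomposition already established for the first alternative form~\eqref{eq: alternative version of configuration entanglement}. First I would introduce the shorthand $\rho_{n,n'} \coloneqq (\1^A \otimes \Pi_n^B)\rho(\1^A \otimes \Pi_{n'}^B)$ and observe that, by the completeness relation $\sum_n \Pi_n^B = \1^B$, the density matrix decomposes as $\rho = \sum_{n,n'}\rho_{n,n'}$. Comparing with the definition~\eqref{app: locally measured density matrix}, the locally measured state $\rho_\Pi = \sum_n \rho_{n,n}$ collects exactly the diagonal blocks, so that $\rho - \rho_\Pi = \sum_{n\neq n'}\rho_{n,n'}$ is the purely off-diagonal part. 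The whole proposition then amounts to evaluating the Frobenius norm of this off-diagonal part.

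Next I would expand the square, writing $\tr\!\left((\rho-\rho_\Pi)^2\right) = \sum_{n\neq n'}\sum_{m\neq m'}\tr(\rho_{n,n'}\rho_{m,m'})$. The heart of the argument is the orthogonality $\Pi_n^B\Pi_m^B = \delta_{nm}\Pi_n^B$ of the local particle-number projectors. Inserting the projectors explicitly into $\tr(\rho_{n,n'}\rho_{m,m'})$ and cycling one projector around the trace, the two internal projector products force $m = n'$ and $m' = n$; every other cross term vanishes. This collapses the double sum to $\sum_{n\neq n'}\tr(\rho_{n,n'}\rho_{n',n})$.

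Finally, since both $\rho$ and the projectors are Hermitian, one has $\rho_{n',n} = \rho_{n,n'}^\dagger$, so each surviving term is $\tr(\rho_{n,n'}\rho_{n,n'}^\dagger) = \|\rho_{n,n'}\|^2$. The sum is therefore $\sum_{n\neq n'}\|\rho_{n,n'}\|^2$, which is precisely the configuration coherence in the form~\eqref{eq: alternative version of configuration entanglement}, completing the proof. Equivalently, I could note at the outset that $\rho - \rho_\Pi$ is Hermitian, so that $\tr\!\left((\rho-\rho_\Pi)^2\right) = \|\rho-\rho_\Pi\|^2$, and that the distinct off-diagonal blocks $\rho_{n,n'}$ are supported on mutually orthogonal subspaces; this yields the Pythagorean identity $\|\rho-\rho_\Pi\|^2 = \sum_{n\neq n'}\|\rho_{n,n'}\|^2$ in one stroke.

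The only non-automatic step — and thus the main obstacle — is the bookkeeping of projector indices when reducing the double sum: I would take care to verify that the cyclic reordering of the trace genuinely pins both $m = n'$ and $m' = n$, and that no diagonal-block ($n=n'$) contributions sneak back in through the products. Once the single application of orthogonality is handled cleanly, the rest is immediate, and the Hermiticity relation $\rho_{n',n} = \rho_{n,n'}^\dagger$ makes the identification with \eqref{eq: alternative version of configuration entanglement} transparent.
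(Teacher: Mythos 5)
Your proposal is correct and follows essentially the same route as the paper: both rest on the decomposition $\rho-\rho_\Pi=\sum_{n\neq n'}\rho_{n,n'}$ obtained by inserting $\1=\sum_n \1^A\otimes\Pi_n^B$, together with projector orthogonality forcing the squared Frobenius norm of this off-diagonal part to split as $\sum_{n\neq n'}\|\rho_{n,n'}\|^2$. Your explicit cross-term computation (cyclicity of the trace pinning $m=n'$, $m'=n$, plus $\rho_{n',n}=\rho_{n,n'}^\dagger$) is simply the unpacked justification of the Pythagorean additivity $\|\rho_{n,n'}+\rho_{m,m'}\|^2=\|\rho_{n,n'}\|^2+\|\rho_{m,m'}\|^2$ that the paper asserts from orthogonality, so the two arguments coincide in substance.
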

\begin{proof}
Due to orthogonality of the local projectors, it holds that $||\rho_{n,n'} + \rho_{m,m'}||^2=||\rho_{n,n'}||^2+ ||\rho_{m,m'}||^2 $ for $(n,n')\neq (m,m')$.
By inserting two identities $\1 = \sum_n \1^A\otimes \Pi_n^B$, we find
\begin{align}
\tr\left((\rho - \rho_\Pi)^2\right) &= ||\rho - \rho_\Pi||^2 =||\1 \rho \1 - \rho_\Pi||^2 = ||\sum_{n,n'} (\1^A\otimes \Pi_n^B)\rho (\1^A\otimes \Pi_{n'}^B) - \sum_n (\1^A\otimes \Pi_n^B)\rho (\1^A\otimes \Pi_{n}^B)||^2 \nonumber \\ 
&= ||\sum_{n\neq n'} (\1^A\otimes \Pi_n^B)\rho (\1^A\otimes \Pi_{n'}^B)||^2 = ||\sum_{n\neq n'}\rho_{n,n'}||^2= \sum_{n\neq n'}||\rho_{n,n'}||^2 = \E (\rho)\,.
\end{align}
\end{proof}

Note that the form~\eqref{eq: configuration entanglement as relative purity} shows the close relation of the configuration coherence with the number entanglement, which is the relative entropy between $\rho$
 and $\rho_\Pi$~\cite{ma_et_al_2022}. This means that both measures are sensitive to the same flavour of quantum correlations, namely the one between different sub-charge sectors.
 
\begin{proposition}
\label{prop: convexity}
The configuration coherence is convex, i.e., 
\begin{equation}
    \sum_i p_i \E(\rho_i) \geq \E \left(\sum_i p_i \rho_i\right)\, ,
\end{equation}
for density matrices $\rho_i$ and $\sum_i p_i=1$.

\end{proposition}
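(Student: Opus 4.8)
The plan is to prove convexity by leveraging the alternative characterization from Proposition~\ref{prop: second alternative form of configuration coherence}, namely $\E(\rho) = \tr\left((\rho - \rho_\Pi)^2\right)$, rather than working directly with the sum over off-diagonal Frobenius norms. The key observation is that the map $\rho \mapsto \rho_\Pi$ defined in Eq.~\eqref{app: locally measured density matrix} is linear, since it is just conjugation by the fixed projectors $\1^A \otimes \Pi_n^B$ summed over $n$. Consequently, the map $\rho \mapsto \rho - \rho_\Pi$, which projects onto the off-diagonal (in particle-number sectors) part of the density matrix, is also linear. Convexity of $\E$ then reduces to convexity of the quadratic form $X \mapsto \tr(X^2) = \|X\|^2$ composed with this linear map.

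The main steps would proceed as follows. First I would write $\rho = \sum_i p_i \rho_i$ and use linearity of $X \mapsto X - X_\Pi$ to get $\rho - \rho_\Pi = \sum_i p_i (\rho_i - (\rho_i)_\Pi)$. Setting $X_i \coloneqq \rho_i - (\rho_i)_\Pi$, the claim becomes $\sum_i p_i \|X_i\|^2 \geq \|\sum_i p_i X_i\|^2$. This is precisely the statement that the squared Frobenius norm is convex, which follows from the fact that $\|\cdot\|^2$ is a convex function (it is a norm squared on a real inner-product space, hence convex via the triangle inequality and monotonicity of $t\mapsto t^2$, or directly from Jensen's inequality applied to the convex function $\|\cdot\|^2$). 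I would make this last inequality explicit, for instance by expanding $\|\sum_i p_i X_i\|^2 = \sum_{i,j} p_i p_j \langle\langle X_i | X_j\rangle\rangle$ and comparing term by term with $\sum_i p_i \|X_i\|^2$ using $\sum_j p_j = 1$; the difference is $\tfrac{1}{2}\sum_{i,j} p_i p_j \|X_i - X_j\|^2 \geq 0$.

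I do not expect a serious obstacle here, since the result is essentially the convexity of a squared norm pulled back through a linear map. The one point requiring a little care is confirming that $\rho \mapsto \rho_\Pi$ is genuinely linear and that the projectors are identical for every $\rho_i$ in the mixture (they are, because they depend only on the fixed bipartition and the fixed total charge $N$, not on the state). If I instead attempted the proof directly from Eq.~\eqref{eq: configuration entanglement (app)} as a sum of $|\rho_{i_n,\mu_n;j_{n'},\nu_{n'}}|^2$ over off-diagonal entries, I would be proving convexity of a sum of squared moduli of linear functionals of $\rho$, which is the same computation written out in components; the operator-level formulation via Proposition~\ref{prop: second alternative form of configuration coherence} is cleaner and avoids index bookkeeping. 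The hard part, if any, is purely presentational: ensuring the term-by-term comparison $\sum_i p_i \|X_i\|^2 - \|\sum_i p_i X_i\|^2 = \tfrac{1}{2}\sum_{i,j} p_i p_j \|X_i - X_j\|^2$ is stated correctly so that nonnegativity is manifest.
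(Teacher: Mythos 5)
Your proof is correct, and it follows the same overall reduction as the paper's: both arguments rest on Proposition~\ref{prop: second alternative form of configuration coherence}, the linearity of the measurement map $\rho \mapsto \rho_\Pi$, and the convexity of $X \mapsto \tr(X^2)$ on Hermitian matrices. Where you genuinely diverge is in how that last convexity step is executed. The paper first proves the binary case $\E(\lambda\rho + (1-\lambda)\sigma) \leq \lambda\E(\rho) + (1-\lambda)\E(\sigma)$ by citing convexity preservation of the trace function together with convexity of $f(t)=t^2$, and then extends to arbitrary finite mixtures by induction, peeling off one state at a time and verifying that each renormalized tail $\sigma_j = \sum_{i>j} \frac{p_i}{1-\sum_{k\leq j} p_k}\, \rho_i$ is a valid density matrix. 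Your variance identity handles the general mixture in a single self-contained step: writing $X_i = \rho_i - (\rho_i)_\Pi$ (Hermitian, so all cross terms $\tr(X_i X_j)$ are real), one has
\begin{equation}
    \sum_i p_i\, \tr\left(X_i^2\right) - \tr\Bigl(\bigl(\textstyle\sum_i p_i X_i\bigr)^2\Bigr) = \frac{1}{2}\sum_{i,j} p_i p_j\, \tr\left((X_i - X_j)^2\right) \geq 0\,,
\end{equation}
which is elementary inner-product algebra needing no external convexity result, no induction, and no check on intermediate states --- indeed it does not even use that the $X_i$ arise from density matrices, only their Hermiticity. What the paper's route buys is a reusable binary convexity lemma stated at the level of quantum states; what yours buys is brevity, a manifestly nonnegative defect term, and fewer moving parts. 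Your one point of required care --- that the projectors $\1^A \otimes \Pi_n^B$ are state-independent so that $\rho\mapsto\rho_\Pi$ is genuinely linear --- is correctly identified and correctly resolved.
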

\begin{proof}
First we prove that 
\begin{equation}
\label{eq: convexity for two operators}
    \lambda \E(\rho) + (1-\lambda) \E(\sigma) \geq \E(\lambda \rho + (1-\lambda) \sigma)\,,
\end{equation}
with density matrices $\rho, \ \sigma$ and $0\leq \lambda \leq 1$.
The measurement $\rho \mapsto \rho_\Pi$ is linear, thus
\begin{align}
    \E (\lambda \rho + (1-\lambda) \sigma)&=\tr (\lambda (\rho - \rho_\Pi) + (1-\lambda) (\sigma - \sigma_\Pi))^2 \nonumber \\
    &\leq \lambda \tr ( (\rho - \rho_\Pi))^2 + (1-\lambda) \tr ((\sigma - \sigma_\Pi))^2 = \lambda \E (\rho) + (1-\lambda) \E (\sigma) \, ,
\end{align}
where the inequality follows from the convexity conservation of the trace function~\cite{carlen_2010} and the convexity of $f(t) = t^2$.

Next, we repeatedly use statement~\eqref{eq: convexity for two operators} and find
\begin{align}
    \E \left(\sum_i p_i \rho_i\right)&=\E \bigg(p_1 \rho_1 + (1-p_1)\underbrace{\sum_{i>1} \frac{p_i}{1-p_1} \rho_i}_{\sigma_1}\bigg) \leq p_1 \E(\rho_1) + (1-p_1)\E(\sigma_1) \nonumber \\ &= p_1 \E(\rho_1) + (1-p_1)\E\bigg(\frac{p_2}{1-p_1} \rho_2 + \frac{1-p_1-p_2}{1-p_1}\underbrace{\sum_{i>2}\frac{p_i}{1-p_1-p_2}\rho_i}_{\sigma_2}\bigg)  \\
    &\leq p_1 \E (\rho_1) + p_2 \E(\rho_2) + (1-p_1-p_2)\E(\sigma_2) \leq \dots \leq \sum_i p_i \E (\rho_i) \, \nonumber .
\end{align}
The $\sigma_j=\sum_{i>j}\frac{p_i}{1-\sum_{k\leq j}p_k} \rho_i$ define valid density matrices because
\begin{align}
    \frac{1}{1-\sum_{k\leq j}p_k}\sum_{i>j}p_i \underbrace{=}_{\sum_i p_i = 1} \frac{1}{1-\sum_{k\leq j}p_k}\left( 1-\sum_{k\leq j}p_k\right) = 1\, .
\end{align}
\end{proof}

\begin{proposition}
\label{prop: vanishes for separable states}
The configuration coherence vanishes for separable states $\rho_{\rm sep}=\sum_i p_i \rho_i^A \otimes \rho_i^B$ with fixed particle number.
\end{proposition}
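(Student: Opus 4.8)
The plan is to reduce the claim to a statement about each product term in the separable decomposition and then invoke the fixed-charge constraint to pin down the local charges. Writing $\rho_{\rm sep}=\sum_i p_i\,\rho_i^A\otimes\rho_i^B$ with $p_i>0$, I would first note that every summand $p_i\,\rho_i^A\otimes\rho_i^B$ is positive semidefinite. Since $\rho_{\rm sep}$ has fixed particle number $N$, it satisfies $\rho_{\rm sep}=P_N\rho_{\rm sep}P_N$, where $P_N=\sum_m \Pi_m^A\otimes\Pi_{N-m}^B$ projects onto the total-$N$ sector (with $\Pi_m^A$ the analogue of $\Pi_n^B$ in subsystem $A$), equivalently $\tr[(\1-P_N)\rho_{\rm sep}]=0$. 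Because this is a sum of nonnegative numbers $p_i\,\tr[(\1-P_N)\,\rho_i^A\otimes\rho_i^B]$, each term must vanish separately, and positivity of the product operator then forces every $\rho_i^A\otimes\rho_i^B$ to be supported entirely in the total-$N$ sector.

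The key step is then the lemma that a \emph{product} state supported on a single total-charge sector must have definite local charges. To see this, let $a_m=\tr(\Pi_m^A\rho_i^A)$ and $b_k=\tr(\Pi_k^B\rho_i^B)$ be the marginal charge distributions; both are nonnegative and normalised. The population of $\rho_i^A\otimes\rho_i^B$ in the total-charge-$c$ sector is $\sum_{m+k=c}a_m b_k$, a sum of nonnegative terms, which must vanish for every $c\neq N$. Hence $a_m b_k=0$ whenever $m+k\neq N$, and this forces the supports of $a$ and $b$ to collapse to single points $m_i$ and $N-m_i$. Using positivity of $\rho_i^A$ once more, $\tr(\Pi_m^A\rho_i^A)=0$ for $m\neq m_i$ implies $\Pi_m^A\rho_i^A=0$, so $\rho_i^A=\Pi_{m_i}^A\rho_i^A\Pi_{m_i}^A$ carries definite $A$-charge $m_i$, and likewise $\rho_i^B$ carries definite $B$-charge $N-m_i$.

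Finally I would feed this back into the off-diagonal form of Eq.~\eqref{eq: alternative version of configuration entanglement}. For a term with definite $B$-charge, $\Pi_n^B\rho_i^B\Pi_{n'}^B=\delta_{n,N-m_i}\,\delta_{n',N-m_i}\,\rho_i^B$ vanishes for $n\neq n'$, so that $(\1^A\otimes\Pi_n^B)\,\rho_{\rm sep}\,(\1^A\otimes\Pi_{n'}^B)=\sum_i p_i\,\rho_i^A\otimes(\Pi_n^B\rho_i^B\Pi_{n'}^B)=0$ for all $n\neq n'$, whence $\E(\rho_{\rm sep})=0$; alternatively, each product term has $\E=0$ and convexity (Proposition~\ref{prop: convexity}) together with $\E\ge 0$ yields the same conclusion. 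I expect the main obstacle to be exactly this lemma: the naive worry is that a single product factor could carry coherence between two distinct charge values that cancels against the other terms in the mixture, and the resolution is precisely the positivity argument, which rules out such coherence term by term once the whole state is confined to a single total-charge sector.
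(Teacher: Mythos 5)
Your proof is correct, but it takes a genuinely different route from the paper's. The paper reduces to a single product state via convexity (Proposition~\ref{prop: convexity}) and then works with the charge operator: from $[\rho^A\otimes\rho^B,\hat{N}]=0$, a partial trace over $A$ yields $[\rho^B,\hat{N}^B]=0$, hence $[\rho^B,\Pi_n^B]=0$ for all $n$, so $\rho_\Pi=\rho$ and $\E(\rho)=0$ follows from the relative-purity form~\eqref{eq: configuration entanglement as relative purity}. You avoid the commutator entirely: you first use positivity to show that every summand $p_i\,\rho_i^A\otimes\rho_i^B$ is itself supported in the total-$N$ sector, then prove the combinatorial lemma that a product state confined to a single total-charge sector must have definite local charges (the condition $a_m b_k=0$ for $m+k\neq N$ collapses both marginal distributions to points), and conclude directly from the off-diagonal form~\eqref{eq: alternative version of configuration entanglement} that all blocks with $n\neq n'$ vanish term by term. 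Two points of comparison are worth noting. First, your opening positivity step explicitly closes a gap that the paper's write-up leaves implicit: convexity only reduces the problem to product states if each term of the separable decomposition individually inherits the fixed-charge hypothesis, and this is not given a priori --- the paper silently applies $[\rho,\hat{N}]=0$ to each product factor, whereas you derive it, correctly identifying positivity (populations in wrong sectors are nonnegative and sum to zero) as the reason cancellations between terms cannot occur. Second, your lemma is strictly stronger than what the paper extracts: the commutator argument only yields block-diagonality of $\rho^B$ in the charge basis, which suffices, while you pin down a single definite local charge $m_i$ (and $N-m_i$) per factor; the extra strength is unnecessary for the proposition but correct, via the standard fact that $\tr(\Pi\rho)=0$ with $\rho\geq 0$ forces $\Pi\rho=0$. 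The trade-off is that the paper's argument is shorter and would even apply under the weaker hypothesis that the product state merely commutes with $\hat{N}$ rather than living in one sector, whereas yours is more elementary (positivity plus counting, no partial-trace manipulation) and is the more complete argument for the proposition as actually stated for an arbitrary separable decomposition.
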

\begin{proof}
Due to the convexity property~\ref{prop: convexity}, it suffices to show that the configuration coherence vanishes for a product state $\rho = \rho^A \otimes \rho^B$.
We use the total particle number operator
\begin{equation}
\hat{N}=\hat{N}^A\otimes \1^B + \1^A \otimes \hat{N}^B\, ,
\end{equation}
with the local particle number operators $\hat{N}^{A, B} = \sum_n n \Pi_n^{A,B}$.
Due to the fixed particle number, it holds that $[\rho, \hat{N}]=0$.
Using the partial trace $\tr_A$ over subsystem $A$, we find
\begin{align}
\label{app: proof of vanishing for product state}
    0&=\tr_A\left([\rho, \hat{N}]\right)=\tr_A \left([\rho^A \otimes \rho^B, \hat{N}^A\otimes \1^B + \1^A \otimes \hat{N}^B]\right) \nonumber \\
    &=\tr\left(\rho^A \hat{N^A}\right) \rho^B + \underbrace{\tr(\rho^A)}_{=1} \left(\rho^B \hat{N}^B\right) 
    - \underbrace{\tr\left(\hat{N^A}\rho^A \right)}_{=\tr\left(\rho^A \hat{N^A}\right)} \rho^B - \underbrace{\tr(\rho^A)}_{=1} \left(\hat{N}^B\rho^B \right) \\
    &= \rho^B\hat{N}^B - \hat{N}^B\rho^B = [\rho^B, \hat{N}^B] = \sum_n n [\rho^B, \Pi_n^B]\nonumber\, .
\end{align}
Due to the orthogonality of the projectors, it follows $[\rho^B, \Pi_n^B]=0, \ \forall n$.
Thus, for a product state, the locally measured density matrix~\eqref{app: locally measured density matrix} is equivalent to the density matrix, $(\rho^A \otimes \rho^B)_\Pi=\rho^A \otimes \rho^B$.
From the form~\eqref{eq: configuration entanglement as relative purity} it follows that the configuration coherence must vanish for a product state, and therefore for separable states.

%If $\rho_{\rm sep}$ has a fixed charge $N$, the local density matrix $\rho_i^B$ must have fixed local charge $0\leq n_i \leq N$. Thus, 
%\begin{align}
%(\1^A \otimes \Pi_n^B) \rho_i^A \otimes \rho_i^B (\1^A \otimes \Pi_{n'}^B) = \delta_{n,n_i}\delta_{n', n_i} \rho_i^A \otimes \rho_i^B\,.
%\end{align}
%It follows from~\eqref{eq: alternative version of configuration entanglement} that the configuration coherence vanishes for every product state $\rho_i^A \otimes \rho_i^B$, and thus by convexity for the separable state itself.

\end{proof}

\begin{proposition}
\label{prop: invariant under local unitary}
    The configuration coherence is invariant under local particle number conserving unitary operations.
\end{proposition}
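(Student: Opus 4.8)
The plan is to work from the Frobenius-norm form of the configuration coherence established in Eq.~\eqref{eq: alternative version of configuration entanglement}, namely $\E(\rho)=\sum_{n\neq n'}\|(\1^A\otimes\Pi_n^B)\rho(\1^A\otimes\Pi_{n'}^B)\|^2$, and to exploit two facts: that a local particle-number conserving unitary commutes with the symmetry projectors $\1^A\otimes\Pi_n^B$, and that the Frobenius norm is invariant under unitary multiplication on either side.

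First I would fix notation: a local particle-number conserving unitary is $U=U^A\otimes U^B$ with $[U^A,\hat N^A]=0$ and $[U^B,\hat N^B]=0$, equivalently $[U^A,\Pi_n^A]=0$ and $[U^B,\Pi_n^B]=0$ for all $n$. The key lemma is then that $U$ commutes with each global projector $\1^A\otimes\Pi_n^B$: indeed $(\1^A\otimes\Pi_n^B)(U^A\otimes U^B)=U^A\otimes(\Pi_n^B U^B)=U^A\otimes(U^B\Pi_n^B)=(U^A\otimes U^B)(\1^A\otimes\Pi_n^B)$, where only the $B$-factor commutation is used and $\1^A$ trivially commutes with $U^A$.

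Next I would substitute $\rho\mapsto U\rho U^\dagger$ into the Frobenius-norm form and push the projectors through the unitaries using this commutation, obtaining $(\1^A\otimes\Pi_n^B)U\rho U^\dagger(\1^A\otimes\Pi_{n'}^B)=U\,(\1^A\otimes\Pi_n^B)\rho(\1^A\otimes\Pi_{n'}^B)\,U^\dagger$. Unitary invariance of the Frobenius norm, $\|UXU^\dagger\|=\|X\|$ (which follows from $\tr((UXU^\dagger)^\dagger UXU^\dagger)=\tr(X^\dagger X)$), then gives $\|(\1^A\otimes\Pi_n^B)U\rho U^\dagger(\1^A\otimes\Pi_{n'}^B)\|=\|(\1^A\otimes\Pi_n^B)\rho(\1^A\otimes\Pi_{n'}^B)\|$ term by term. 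Summing over $n\neq n'$ yields $\E(U\rho U^\dagger)=\E(\rho)$.

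I do not expect a genuine obstacle here; the only point requiring care is the logical separation of the two subsystems. The $B$-factor must be number conserving so that it commutes with the projectors $\Pi_n^B$ appearing in the definition, whereas the $A$-factor may be an arbitrary unitary because its only role is as a conjugation that the unitary invariance of the Frobenius norm absorbs for free. Alternatively, one could prove the statement from the relative-purity form~\eqref{eq: configuration entanglement as relative purity} by checking that the local measurement map intertwines with conjugation, i.e.\ $(U\rho U^\dagger)_\Pi=U\rho_\Pi U^\dagger$, which follows from the same projector commutation, and then invoking cyclicity of the trace; I would keep the Frobenius-norm route as the cleaner of the two.
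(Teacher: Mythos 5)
Your proof is correct and its skeleton matches the paper's: both hinge on the commutation $[U,\1^A\otimes\Pi_n^B]=0$ followed by unitary invariance of a trace functional. Working with the Frobenius-norm form \eqref{eq: alternative version of configuration entanglement} term by term, versus the paper's use of the relative-purity form \eqref{eq: configuration entanglement as relative purity} together with the intertwining relation $(U\rho U^\dag)_\Pi=U\rho_\Pi U^\dag$ and trace cyclicity, is only a cosmetic difference, as you note yourself. The genuine divergence is in how the key commutation is obtained. The paper takes the hypothesis to be global conservation, $[U,\hat{N}]=0$ for $U=U^A\otimes U^B$, and extracts $[U,\1^A\otimes\Pi_n^B]=0$ by rerunning the partial-trace computation of Eq.~\eqref{app: proof of vanishing for product state} with $\rho$ replaced by $U$; that derivation requires the side condition $\tr_A U^A\neq 0$, since the computation yields $\tr(U^A)\,[U^B,\hat{N}^B]=0$ and only a nonvanishing trace forces $[U^B,\hat{N}^B]=0$. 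You instead assume factor-wise conservation, $[U^A,\hat{N}^A]=0$ and $[U^B,\hat{N}^B]=0$, which makes the lemma a one-line computation and sidesteps that caveat entirely; this is a reasonable reading of ``local particle number conserving,'' and it is worth knowing that the two hypotheses are not equivalent --- a product unitary whose factors permute local number sectors in compensating ways (necessarily with $\tr U^A=0$) can commute with $\hat{N}$ on the relevant sector without either factor commuting with its local number operator, which is exactly the case the paper's proviso excludes. Your closing observation that $U^A$ may in fact be an arbitrary unitary, because only the projectors $\Pi_n^B$ enter the definition, is also correct and mildly strengthens the statement; the one qualification is that a non-number-conserving $U^A$ takes the state out of the fixed-$N$ sector, so while the formula \eqref{eq: alternative version of configuration entanglement} remains invariant, the configurational interpretation of the blocks $\C_{n,n'}$ underlying the paper's discussion no longer applies.
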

\begin{proof}
    Let $U=U^A \otimes U^B$ be a local unitary transformation, i.e., $U U^\dag = U^\dag U=\1$. The particle number conservation translates to $[U, \hat{N}]=0$. By replacing $\rho \mapsto U$ in~\eqref{app: proof of vanishing for product state} and noting that $\tr_A U^A \neq 0$, we find $[U, \1^A \otimes \Pi_n^B]=0$ $\forall n$. From the latter, it follows for the locally measured density matrix~\eqref{app: locally measured density matrix} that $U\rho_\Pi U^\dag = (U\rho U^\dag)_\Pi$. Thus, 
    \begin{align}
        \E (U\rho U^\dag) &= \tr \left( \left(U\rho U^\dag - (U \rho U^\dag)_\Pi\right)^2\right) = \tr \left( \left(U\rho U^\dag - U \rho_\Pi U^\dag \right)^2\right) \nonumber \\ 
        &= \tr \left( \left(U(\rho - \rho_\Pi)U^\dag\right)^2 \right) = \tr \left((\rho - \rho_\Pi)^2\right) = \E (\rho)\,,
    \end{align}
    where the second-to-last equality follows from the unitarity of $U$ and the cyclic property of the trace.
\end{proof}

\begin{proposition}
The configuration coherence is monotonous under local purity non-increasing particle number conserving operations.
\end{proposition}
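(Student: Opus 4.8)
The plan is to prove monotonicity directly from the ``relative purity'' form~\eqref{eq: configuration entanglement as relative purity} of Proposition~\ref{prop: second alternative form of configuration coherence}, read as a squared Frobenius norm of the charge-off-diagonal part of $\rho$. Let $\mathcal{D}(\rho)\coloneqq\rho_\Pi=\sum_n(\1^A\otimes\Pi_n^B)\rho(\1^A\otimes\Pi_n^B)$ denote the local dephasing channel in the $B$-charge basis, and let $\mathcal{Q}\coloneqq\mathrm{id}-\mathcal{D}$ be the superoperator extracting the off-diagonal sectors, so that $\E(\rho)=\tr\big((\mathcal{Q}\rho)^2\big)=\|\mathcal{Q}\rho\|^2$. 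Writing $\Phi$ for the local, particle-number conserving, purity non-increasing map, the claim $\E(\Phi(\rho))\le\E(\rho)$ becomes the contraction statement $\|\mathcal{Q}\,\Phi(\rho)\|\le\|\mathcal{Q}\rho\|$.

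I would proceed in two steps. First, I would show that particle-number conservation forces $\Phi$ to commute with the dephasing, $\Phi\circ\mathcal{D}=\mathcal{D}\circ\Phi$, hence $\Phi\circ\mathcal{Q}=\mathcal{Q}\circ\Phi$. This mirrors the argument in the proof of Proposition~\ref{prop: invariant under local unitary}: conservation of $\hat{N}=\hat{N}^A\otimes\1^B+\1^A\otimes\hat{N}^B$ together with locality forces the Kraus operators of $\Phi$ to preserve the $B$-charge blocks, i.e.\ to commute with every $\1^A\otimes\Pi_n^B$. It then follows that $\mathcal{Q}\,\Phi(\rho)=\Phi(\mathcal{Q}\rho)=\Phi(\rho^{\mathrm{off}})$ with the Hermitian operator $\rho^{\mathrm{off}}\coloneqq\rho-\rho_\Pi$, so that $\E(\Phi(\rho))=\|\Phi(\rho^{\mathrm{off}})\|^2$. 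Second, I would establish that any purity non-increasing, i.e.\ unital, channel contracts the Frobenius norm, $\|\Phi(X)\|\le\|X\|$ for every Hermitian $X$; applying this to $X=\rho^{\mathrm{off}}$ immediately yields $\E(\Phi(\rho))\le\|\rho^{\mathrm{off}}\|^2=\E(\rho)$.

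The main obstacle is this second step, because purity non-increase is a statement about density matrices whereas $\rho^{\mathrm{off}}$ is a traceless, generally indefinite Hermitian operator; in particular the naive comparison of $\P(\rho)-\P(\rho_\Pi)$ with $\P(\Phi(\rho))-\P(\Phi(\rho_\Pi))$ does not close, so an operator-level contraction is needed. I would obtain it through the adjoint: since $\Phi$ is unital and trace preserving, $\Phi^\dagger$ is again a unital completely positive trace-preserving map, whence $\mathcal{F}\coloneqq\Phi^\dagger\Phi$ is a unital channel that is moreover self-adjoint and positive semidefinite as a superoperator, because $\tr\big(X^\dagger\mathcal{F}(X)\big)=\|\Phi(X)\|^2\ge0$. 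Every channel is a trace-norm contraction and therefore has spectral radius at most $1$; being self-adjoint and positive, $\mathcal{F}$ consequently has all its eigenvalues in $[0,1]$ and largest eigenvalue at most $1$. Hence $\|\Phi(X)\|^2=\tr\big(X^\dagger\mathcal{F}(X)\big)\le\|X\|^2$ for all $X$, which completes the argument. A more computational alternative would invoke Uhlmann's majorization theorem for unital channels and Schur convexity of $\vec{\mu}\mapsto\sum_i\mu_i^2$, but extending that reasoning from states to the indefinite operator $\rho^{\mathrm{off}}$ makes the adjoint route preferable.
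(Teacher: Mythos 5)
Your proof is correct, and while its first half coincides with the paper's, its second half takes a genuinely different route. The commutation step --- each Kraus operator commutes with $\1^A\otimes\Pi_n^B$, hence the channel commutes with the local dephasing $\rho\mapsto\rho_\Pi$ --- is exactly the lemma the paper also uses (and you inherit the same implicit caveat as the paper's unitary case, namely the $\tr_A K_i^A\neq 0$ assumption in the partial-trace argument). The divergence is in the contraction step: the paper decomposes $\E(\rho)=\sum_{n'>n}\tr\left(\rho_{n,n'}+\rho_{n',n}\right)^2$ into Hermitian sector pairs and bounds each term with Jensen's trace inequality for $f(t)=t^2$, using unitality in the form $\sum_i K_iK_i^\dag=\1$ and trace preservation at the end; you instead work with the single Hermitian operator $\rho-\rho_\Pi$ via the form~\eqref{eq: configuration entanglement as relative purity} and prove the general lemma that a unital trace-preserving channel contracts the Frobenius norm. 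Your spectral argument for that lemma is sound: $\Phi^\dag$ is unital CPTP exactly when $\Phi$ is, so $\Phi^\dag\Phi$ is a channel that is self-adjoint and positive semidefinite with respect to the Hilbert--Schmidt inner product, and its spectral radius is bounded by its induced trace-norm, which is $1$ for a channel; hence all its eigenvalues lie in $[0,1]$ and $\tr\left(X^\dag \Phi^\dag\Phi(X)\right)\leq \tr\left(X^\dag X\right)$. What your route buys is a sector-free, single-lemma argument that holds for arbitrary (not only Hermitian) operators and sidesteps Jensen's trace inequality entirely --- including the paper's closing remark about needing square Kraus operators, though you implicitly need the same fixed-Hilbert-space assumption for unitality and the adjoint composition to make sense. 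What the paper's route buys is an elementary, self-contained per-sector estimate (modulo citing Davis) that stays within the block structure used throughout the text. As an aside, your lemma also admits a one-line proof avoiding the spectral analysis: Kadison--Schwarz for the unital CP map gives $\Phi(X)^\dag\Phi(X)\leq\Phi(X^\dag X)$, and trace preservation then yields $\tr\left(\Phi(X)^\dag\Phi(X)\right)\leq\tr\left(X^\dag X\right)$.
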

\begin{proof}
We consider a local charge conserving operation $\rho \mapsto \mathcal{E} (\rho) = \sum_i K_i \rho K_i^\dag$, with Kraus operators $K_i = K_i^A \otimes K_i^B$ satisfying $\sum_i K_i^\dag K_i = \1$. The particle number conservation implies $[K_i, \1^A \otimes \Pi_n^B]=0$ (proof is equivalent to proof of $[U, \1^A \otimes \Pi_n^B]=0$ for~\ref{prop: invariant under local unitary}). The most general purity non-increasing maps are unital maps~\cite{gour_et_al_2015, streltsov_et_al_2018}, i.e., $\sum_i K_i K_i^\dag = \1$. 
Using the fact that $||\rho_{n,n'} + \rho_{m,m'}||^2 = ||\rho_{n,n'}||^2 + ||\rho_{m,m'}||^2$ for $(n,n')\neq (m,m')$, we rewrite the configuration coherence~\eqref{eq: configuration entanglement (app)} as
\begin{align}
    \E(\rho) = \sum_{n'\neq n} ||\rho_{n,n'}||^2 = \sum_{n'>n} ||\rho_{n,n'}||^2 + ||\rho_{n',n}||^2 = \sum_{n'>n} ||\rho_{n,n'} + \rho_{n',n}||^2 =  \sum_{n'>n} \tr (\rho_{n,n'} + \rho_{n',n})^2\,.
\end{align}
Note that the matrices $ \rho_{n,n'} + \rho_{n',n}$ are Hermitian.

We prove the monotonicity in two steps.
First, we show that $\mathcal{E}(\rho)_{n,n'} + \mathcal{E}(\rho)_{n',n}=\mathcal{E}( \rho_{n,n'} + \rho_{n',n})$.
Secondly, we prove that $||\mathcal{E}( \rho_{n,n'} + \rho_{n',n})||^2\leq || \rho_{n,n'} + \rho_{n',n}||^2$ $\forall n'> n$ using Jensen's trace inequality.

For the first step, we have
\begin{align}
    \mathcal{E}(\rho)_{n,n'} &= (\1^A \otimes \Pi_n^B) \mathcal{E}(\rho) (\1^A \otimes \Pi_{n'}^B) = (\1^A \otimes \Pi_n^B) \sum_i K_i \rho K_i^\dag (\1^A \otimes \Pi_{n'}^B) \nonumber \\ 
    &\underbrace{=}_{[K_i, \1^A\otimes\Pi_n^B]=0} \sum_i K_i (\1^A \otimes \Pi_n^B) \rho (\1^A \otimes \Pi_{n'}^B) K_i^\dag = \mathcal{E}(\rho_{n,n'})\,,
\end{align}
From the linearity of $\mathcal{E}$, it follows that
\begin{align}
    \mathcal{E}(\rho)_{n,n'} + \mathcal{E}(\rho)_{n',n}=\mathcal{E}(\rho_{n,n'}) + \mathcal{E}(\rho_{n',n})=\mathcal{E}( \rho_{n,n'} + \rho_{n',n}) \, .
\end{align}

For the second step, we use Jensen's trace inequality~\cite{davis_1957}, which states that
\begin{align}
\tr \left(f\left( \sum_i K_i X_i K_i^\dag \right) \right) \leq \tr \left( \sum_i K_i f\left(X_i\right) K_i^\dag\right)\,,
\end{align}
for convex $f$, Hermitian matrices $X_i$ and quadratic $K_i$ satisfying $\sum_i K_i K_i^\dag=\1$. From setting $X_i=\rho_{n,n'}+\rho_{n',n} \ \forall i$ and using the unitality of $\mathcal{E}$ and the convexity of $f(t)=t^2$, it follows that
\begin{align}
    \tr (\mathcal{E}(\rho)_{n,n'} + \mathcal{E}(\rho)_{n',n})^2 &= \tr (\mathcal{E}(\rho_{n,n'} + \rho_{n',n}))^2=\tr \left(\sum_i K_i (\rho_{n,n'} + \rho_{n',n}) K_i^\dag \right)^2 
    \nonumber \\
    &\underbrace{\leq}_{\rm Jensen} \tr \left(\sum_i K_i (\rho_{n,n'} + \rho_{n',n})^2 K_i^\dag \right)
    =\tr \bigg((\rho_{n,n'} + \rho_{n',n})^2 \underbrace{\sum_i K_i^\dag K_i}_{=\1} \bigg)  \nonumber \\ &=
    \tr (\rho_{n,n'} + \rho_{n',n})^2 \, .
\end{align}

Finally, we find that
\begin{align}
    \E (\mathcal{E}(\rho))=\sum_{n'>n} \tr (\mathcal{E}(\rho)_{n,n'} + \mathcal{E}(\rho)_{n',n})^2 \leq \sum_{n'>n} \tr (\rho_{n,n'} + \rho_{n',n})^2 = \E (\rho)
\end{align}
Note that for general quantum maps, the $K_i$ do not have to be quadratic, and in that case Jensen's trace inequality does not apply.
However, we are interested in situations where the system's Hilbert space is fixed, which requires that the map $\mathcal{E}$ is built from quadratic Kraus operators $K_i$.
\end{proof}

\section{Rank of the $\C$-matrix}
\label{app: rank of C-matrix}
	In the main text, we introduce the $\C$-matrix~\eqref{eq: configurational matrix} whose eigenvalues define the operator-space entanglement spectrum (OSES). For fixed particle number, the matrix has a block diagonal structure [cf.~eq.~\eqref{eq: configurational matrix for n particles} and Fig.~\ref{fig: Systems and Failure of OSES}(d)]. Here, we discuss the rank of these blocks in more detail.
	
	In a matrix product density operator (MPDO) representation of the mixed state density matrix~\cite{verstraete_et_al_2004}, the rank of the $\C$-matrix defines the necessary bond dimension $\chi_{\rm exact}$ of the MPDO to represent the state exactly. We can calculate the rank by summing over the ranks of the individual blocks of the $\C$-matrix. 
	
	The blocks are given by
	\begin{equation}
	\C_{n,n'}=\sum_{\substack{i_{n},j_{n'}  k_n,l_{n'}}} c^{\phantom{\dagger}}_{i_n,j_{n'};k_n,l_{n'}} \supervec{i_n,j_{n'}}\conjsupervec{k_n,l_{n'}}\,,
	\label{eq: C matrix block}
	\end{equation}
with coefficients
\begin{equation}
c^{\phantom{\dagger}}_{i_n,j_{n'};k_n,l_{n'}} = \sum_{\mu_n,\nu_{n'}}\rho_{i_n,\mu_{n};j_{n'},\nu_{n'}} \rho_{k_n,\mu_n;l_{n'},\nu_{n'}}^*\,.
\end{equation}
Reordering the summation, we can write the block~\eqref{eq: C matrix block} as
\begin{equation}
    \C_{n,n'} = \sum_{\mu_n,\nu_{n'}} v_{\mu_n,\nu_{n'}} v_{\mu_n,\nu_{n'}}^\dag\,, 
    \label{eq: C matrix block as sum over rank 1 matrices}
\end{equation}
with vectors
\begin{equation}
    v_{\mu_n,\nu_{n'}} = \sum_{i_{n},j_{n'}} \rho_{i_n,\mu_{n};j_{n'},\nu_{n'}} \supervec{i_n,j_{n'}}\,.
    \label{eq: vector}
\end{equation}
The length of the vector~\eqref{eq: vector} and thus the size of the block~\eqref{eq: C matrix block} is determined by the number of possible supervectors $\supervec{i_n, j_{n'}}$.

If we denote the size of subsystem $A$ by $L_A$, then there are $\binom{L_A}{n}$ possible states $\state{i_n}$ of $n$ particles in subsystem $A$. It follows that the size of the block~\eqref{eq: C matrix block} is given by $\binom{L_A}{n}\times \binom{L_A}{n'}$. At the same time, the form~\eqref{eq: C matrix block as sum over rank 1 matrices} reveals that the block $\C_{n, n'}$ is a sum over rank-1 matrices $v_{\mu_n,\nu_{n'}} v_{\mu_n,\nu_{n'}}^\dag$.
Consequently, there are $\binom{L_B}{N-n} \times \binom{L_B}{N-n'}$ of these matrices, with $L_B$ the size of subsystem $B$. 
As the rank of a matrix sum is bounded by the sum of the ranks of the summands, we find that the block $\mathcal{C}_{n, n'}$ has a maximal rank 
\begin{equation}
\label{eq: single block rank}
    {\rm rank}\mathcal{C}_{n, n'}={\rm min}\left( \binom{L_A}{n}\times \binom{L_A}{n'},\binom{L_B}{N-n}\times\binom{L_B}{N-n'}\right)\,,
\end{equation}
with $L_A$ ($L_B$) the size of subsystem A (B).
It follows that the maximal bond dimension for $N$ particles is given by
\begin{equation}
    \chi_{\rm exact} = \sum_{n,n'=0}^N {\rm min}\left( \binom{L_A}{n} \times \binom{L_A}{n'},\binom{L_B}{N-n}\times\binom{L_B}{N-n'}\right)\,.
    \label{eq: max rank}
\end{equation}

The maximal rank~\eqref{eq: max rank} for $N$ particles in a system of size $L$ scales as $\chi_{\rm exact} \propto L^N$ (stemming from the blocks with $n+n'=N$). Thus, if the number of particles is fixed, the maximal rank has a power-law scaling with system size, as opposed to the exponential scaling in the general setting. This permits simulation of very large system sizes with fixed particle number. Table~\ref{table: ranks} shows the maximal ranks per block for the example of $N=2$ particles. As expected, the maximal bond dimension scales as $L^2$.

	\begin{table}[h!]
\centering
\begin{tabular}{c c c c}
\hline
\hline
block & $(n,n')$ & degeneracy & maximal rank \\
\hline
AA & $(2,2)$ &  1 & 1 \\

AC & $(2,1)$ &  2 & $\frac{L}{2}$ \\

AB & $(1,1)$ &  1 & $\left(\frac{L}{2}\right)^2$ \\

BC & $(0,1)$ &  2 & $\frac{L}{2}$ \\

BB & $(0,0)$ &  1 & 1 \\

CC & $(2,0)$ & 2 & $\frac{L}{4}\left(\frac{L}{2} - 1\right)$ \\
\hline
\multicolumn{3}{c}{maximal bond dimension } & $\frac{L^2}{2}+\frac{3L}{2}+2$\\
\hline
\end{tabular}
\caption{Maximal ranks of the blocks of the $\mathcal{C}$-matrix for $N=2$ particles on a chain of length $L$ and bipartition in the middle. The maximal bond dimension is obtained as the sum over the maximal ranks times the corresponding degeneracies.}
\label{table: ranks}
\end{table}

Several factors can reduce the maximal bond dimension~\eqref{eq: max rank}.
In particular, under the assumption of local decoherence, cf.~Ref.~\cite{nieuwenburg_zilberberg_2018}, the blocks $\C_{N,0}$ and $\C_{0, N}$ become rank 1.

\section{Configuration coherence and negativity for a single particle}
In this section, we show the equivalence of the configuration coherence~\eqref{eq: configuration entanglement} to the negativity for a mixed state of a single particle.
Recall that the negativity is given as the sum over the negative eigenvalues of the partially transposed density matrix~\cite{vidal_werner_2002}.

We use the particle number basis $\state{i_n,\mu_n}=\state{i_n}_A\otimes\state{\mu_n}_B$ and that for a single particle, it holds $n\in[0,1]$.
Therefore, we can define $\state{i}_A\coloneqq \state{i_1}_A$ and $\state{\mu}_B\coloneqq \state{\mu_1}_B$. 
Note that the only state with zero particles in a subsystem is the vacuum state $\state{0}_{A,B}$.
With this, the density matrix~\eqref{eq: density matrix} for a single particle can be written as
\begin{align}
	\label{eq: density matrix single particle}
	\rho &= \left(\sum_{i,j \in A} \rho_{i,0; j,0} \state{i}\conjstate{j}_A\right) \otimes \pure{0}_B \nonumber 
	+ \pure{0}_A \otimes \left(\sum_{\mu,\nu \in B} \rho_{0,\mu; 0,\nu} \state{\mu}\conjstate{\nu}_B\right) \\
	&+ \sum_{i\in A, \mu \in B} \rho_{i,0; 0, \mu} \state{i}_A\state{0}_B\conjstate{0}_A\conjstate{\mu}_B + \rho_{0, \mu; i, 0} \state{0}_A\state{\mu}_B\conjstate{i}_A\conjstate{0}_B\,.
\end{align}

The partial transpose of~\eqref{eq: density matrix single particle} has a block-diagonal form, and only one block has a negative eigenvalue. It is the block describing the particle in a coherent cross-boundary state,
\begin{align}
	\label{eq: partial transpose coherence block}
	\rho^{T_A}_C &= \sum_{i\in A, \mu \in B} \rho_{i,0; 0, \mu} \state{0}_A\state{0}_B\conjstate{i}_A\conjstate{\mu}_B 
	 + \rho_{i,0; 0, \mu}^* \state{i}_A\state{\mu}_B\conjstate{0}_A\conjstate{0}_B 
	= \state{0}\conjstate{\phi}_{AB} + \state{\phi}\conjstate{0}_{AB}\,, 
\end{align}
with the non normalized vector
\begin{equation}
	\state{\phi}_{AB} \coloneqq \sum_{i\in A, \mu \in B} \rho_{i,0; 0, \mu}^* \state{i}_A\state{\mu}_B\,.
\end{equation}
A rank-2 block of the form~\eqref{eq: partial transpose coherence block} has two eigenvalues $\pm \sqrt{\langle 0|0\rangle_{AB} \langle \phi|\phi\rangle_{AB}}=\pm \sqrt{\langle \phi|\phi\rangle_{AB}}$. Thus, the negativity for the single particle state~\eqref{eq: density matrix single particle} is given by
\begin{equation}
	\label{eq: negativity of single particle state}
	\N(\rho) = \sqrt{\langle \phi|\phi\rangle_{AB}}=\sqrt{\sum_{i\in A, \mu \in B} |\rho_{i,0; 0, \mu}|^2}\,.
\end{equation}

For the configuration coherence~\eqref{eq: configuration entanglement} of the single particle state~\eqref{eq: density matrix single particle}, we find
\begin{equation}
	\label{eq: trace over coherence block of configurational matrix}
	\E (\rho) = \tr \left( \C_{0, 1} \right) + \tr \left( \C_{1, 0} \right)=2\times \tr \left(\C_{0, 1}\right) = 2 \sum_{i\in A, \mu \in B} |\rho_{i,0;0, \mu}|^2 = 2\N(\rho)^2\,.
\end{equation}
Therefore, for mixed states of a single particle, the configuration coherence equals twice the negativity squared.

\end{document}